\newtheorem{theorem}{Theorem}[section]
\newtheorem{lemma}[theorem]{Lemma}
\newtheorem{remark}[theorem]{Remark}
\newenvironment{proof}{\textbf{Proof:}}{\hfill$\square$}
\author[$\ddagger$]{J. Mark Keil}
\author[$\ddagger$]{Fraser McLeod}
\author[$\ddagger$]{Debajyoti Mondal\thanks{Corresponding author: Debajyoti Mondal (d.mondal@usask.ca)}}
\affil[$\ddagger$]{Department of Computer Science, University of Saskatchewan, Saskatoon,  Canada}  
\date{}
\algrenewcommand\algorithmicindent{2em}%
\title{Quantum Speedup for Some Geometric 3SUM-Hard \\
Problems and Beyond\thanks{The work is supported in part by the Natural Sciences and Engineering Research Council of Canada (NSERC).}} 
\begin{document}
\thispagestyle{empty}
\maketitle

\begin{abstract}
The classical 3SUM conjecture states that the class of 3SUM-hard problems does not admit a truly subquadratic $O(n^{2-\delta})$-time algorithm, where $\delta >0$, in classical computing. The geometric 3SUM-hard problems have widely been studied in computational geometry and recently, these problems have been examined under the quantum computing model. For example, Ambainis and Larka [TQC'20] designed a quantum algorithm that can solve many geometric 3SUM-hard problems in $O(n^{1+o(1)})$-time, whereas Buhrman [ITCS'22] investigated lower bounds under quantum 3SUM conjecture that claims there does not exist any sublinear $O(n^{1-\delta})$-time quantum algorithm for the 3SUM problem. The main idea of Ambainis and Larka is to formulate a 3SUM-hard problem as a search problem, where one needs to find a point with a certain property over a set of regions determined by a line arrangement in the plane. The quantum speed-up then comes from the application of the well-known quantum search technique called Grover search over all regions.  
 
This paper further generalizes the technique of Ambainis and Larka for some 3SUM-hard problems when a solution may not necessarily correspond to a single point or the search regions do not immediately correspond to the subdivision determined by a line arrangement. Given a set of $n$ points and a positive number $q$, we design $O(n^{1+o(1)})$-time quantum algorithms to determine whether there exists a triangle among these points with an area at most $q$ or a unit disk that contains at least $q$ points. We also give an $O(n^{1+o(1)})$-time quantum algorithm to determine whether a given set of intervals can be translated so that it becomes contained in another set of given intervals and discuss further generalizations. 

%All these problems are known to be 3SUM-hard.
 
%Specifically, we show that the following problems admit an $O(n^{1+o(1)})$-time quantum algorithm.

%- Given a set of $n$ points, determine whether there is a triangle with area at most $q$.

%- Given a set of $n$ points, determine whether there is a unit disk that covers at least $q$ points.

%- Given a set of $n$ disks (not necessarily with a fixed radius), determine whether there is a point that hits at least $q$ disks.
%Fraser:  please check whether we can do it for variable size disks
% will revisit after we write the above results
%Fraser: please add the ref for the circle paper and see what complexity we have if they are of diff radii

%- Given two sets $P$ and $Q$ of pairwise-disjoint intervals on a line, where $|P|=n$ and $|Q|=O(n)$,   determine whether there is a translation of $P$ that makes it contained in $Q$. 

%Optimal placement of convex polygons to maximize point containment 
%on the duality of intersections and closest points
%On a Circle Placement Problem*
%Finding k points with a smallest enclosing square 

\end{abstract}

\section{Introduction} 
A rich body of research investigates ways to speed up algorithmic computations by using quantum computing techniques.  Grover's algorithm~\cite{DBLP:conf/stoc/Grover98} (a quantum search algorithm) has often been leveraged to obtain quadratic speedup for various problems over the classical solution.
%Quantum computing has emerged as a captivating frontier in theoretical computer science, offering distinct advantages over classical paradigms. Quantum computers hold the potential for quantum speedups, enabling faster solutions to a diverse array of computational problems compared to classical counterparts. However, these speed-ups do not apply equally across all computational domains, making it imperative to explore the theoretical limits of different problems. \\
%One important quantum speed-up with very broad applications is Grover Search. At its core, 
For example, consider the problem of finding a specific item within an unordered database of $n$  items. In the classical setting, this task requires $\Omega(n)$ operations. However,  with high probability, Grover's algorithm can find the item in \(O(\sqrt{n})\) quantum operations~\cite{grover1996fast}. %Amplitude amplification is a generalization, that can be used as a subroutine for many quantum algorithms \cite{brassard2002quantum}. \\
In this paper we investigate quantum speedup for some geometric  %Here we investigate the advantage of quantum computers through the context of 
3SUM-hard problems. 

Given a set $S$ of $n$ numbers, the 3SUM problem asks whether there are elements $a,b,c \in S$ such that $a+b+c = 0$. The class of 3SUM-hard problems consists of problems that are at least as hard as the 3SUM problem. The classical 3SUM conjecture states that the class of 3SUM-hard problems cannot be solved in $O(n^{2-\delta})$ time, where $\delta >0$, in a classical computer~\cite{gajentaan1995class}. However,   3SUM can be solved in $O(n \log n)$ time in a quantum computer by applying Grover search over all possible pairs as follows~\cite{ambainis2020quantum}: We have $O(n)$ quantum search operations to resolve and if we maintain the elements of $S$ in a balanced binary search tree, then for each pair $a,b$, we can decide the existence of $-(a+b)\in S$ in $O(\log n)$ time. In general, such straightforward quantum speedup does not readily apply to all problems even if they can be solved in $O(n^2)$ time in a classic computer \cite[Table 1]{DBLP:conf/innovations/BuhrmanLPS22}.

Ambainis and Larka~\cite{ambainis2020quantum}  designed a quantum algorithm that can solve many geometric 3SUM-hard problems in $O(n^{1+o(1)})$-time. Some examples are Point-On-3-Lines, Triangles-Cover-Triangle, and Point-Covering. The Point-On-3-Lines problem takes a set of lines as input and asks to determine whether there is a point that lies on at least 3 lines.  The Triangles-Cover-Triangle problem asks whether a given set of triangles in the plane covers another given triangle. Given a set of $n$ half-planes and an integer $t$, the Point-Covering problem asks whether there is a point that hits at least $t$ half-planes. 

The idea of Ambainis and Larka~\cite{ambainis2020quantum} is to model these problems as a point search problem over a subdivision of the plane with a small number of regions. %With high probability, these regions would correspond to small-size subproblems where one can recursively apply the Grover search. 
Specifically, consider a random set of $k$ lines in the Point-On-3-Lines problem and a triangulation of an arrangement of these lines, which subdivides the plane into $O(k^2)$ regions. We can check each corner of these regions to check whether it hits at least three lines in $O(nk^2)$ time. Otherwise, 
we can search each region recursively by taking only the lines that intersect the region into consideration. It is known that with high probability every subproblem size (i.e., the number of lines intersecting a region) would be small~\cite{DBLP:journals/dcg/Clarkson87, DBLP:conf/compgeom/HausslerW86}, and one can obtain a running time of $O(n^{1+o(1)})$ by a careful choice of $k$ and by the application of Grover search~\cite{ambainis2020quantum}.  For the Point-Covering problem, one can construct a similar subdivision of the plane using $k$ random half-planes. We can then count for each region $R$, the number $i$ of half-planes fully covering $R$ in $O(nk^2)$ time, and if a solution is not found, then recursively search in the subproblem for a point that hits at least $(t-i)$ half-planes. For the Triangles-Cover-Triangle problem, one can construct the $O(k^2)$-size subdivision (of the given triangle $T$ which we want to cover) by $k$ lines determined by $k$ segments that are randomly chosen from the boundaries of the set $S$ of given triangles. In $O(nk^2)$ time we can determine the regions of $T$ that are fully covered by a single triangle of $S$. For every remaining region $R$, let $S(R)$ be a set of triangles where each intersects $R$ but does not fully contain $R$. We now can search over all such regions $R$ recursively for a point that is not covered by $S(R)$. %the triangles that does not fully contain intersect $R$.  

In this paper we show how Ambainis and Larka's~\cite{ambainis2020quantum} idea %of searching a solution point in a subdivision of a plane 
can be adapted even for problems where a solution may not correspond to a single point or the search regions do not necessarily correspond to a subdivision determined by an arrangement of straight lines. Specifically, we show that the following problems admit an $O(n^{1+o(1)})$-time quantum algorithm.

%\begin{itemize}
%    \item[]
    \smallskip\noindent\textsc{$q$-Area Triangle}: Given a set $S$ of $n$ points, decide whether they determine a triangle with area at most $q$.
    %\item[]
    
    \smallskip\noindent\textsc{$q$-Points in a Disk}: Given a set $S$ of $n$ points, determine whether there is a unit disk that covers at least $q$ of these points. 
    %\item[]
    
    \smallskip\noindent\textsc{Interval Containment}: Given two sets $P$ and $Q$ of pairwise-disjoint intervals on a line, where $|P|=n$ and $|Q|=O(n)$, determine whether there is a translation of $P$ that makes it contained in $Q$. 
%\end{itemize}

\smallskip
All these problems are known to be 3SUM-hard. If $q=0$, then the \textsc{$q$-Area Triangle} problem is the same as determining whether three points of $S$ are collinear, which is known to be 3SUM-hard~\cite{DBLP:journals/comgeo/GajentaanO12}. If we draw unit disks centered at the points of $S$, then the deepest region in this disk arrangement corresponds to a 
 location for the center of the unit disk that would contain most points. Determining the deepest region in a disk arrangement\footnote{Although the reduction of~\cite{DBLP:journals/siamcomp/AronovH08} uses disks of various radii, it is straightforward to modify the proof with same size disks.} is known to be 3SUM-hard~\cite{DBLP:journals/siamcomp/AronovH08}, which can be used to show the 3-SUM-hardness of \textsc{$q$-Points in a Disk}. Barequet and Har-Peled~\cite{DBLP:journals/ijcga/BarequetH01} showed that the \textsc{Interval Containment} problem is 3-SUM-hard.

Our techniques generalize to a general  \textsc{Pair Search Problem}: Given a problem $P$ of size $n$, where a solution for $P$ can be defined by a pair of elements in $P$, and a procedure $A$ that can verify whether a given pair corresponds to a solution in $O(n^{1+o(1)})$ classical time, determine a solution pair for $P$. Consequently, we obtain $O(n^{1+o(1)})$-time quantum algorithms for the following problems.

%\begin{itemize}
    %\item[]
    \smallskip\noindent\textsc{Polygon Cutting}: Given a simple $n$-vertex polygon $P$, an edge $e$ of $P$ and an integer $K>2$, is there a line that intersects $e$ and cuts the polygon into exactly $K$ pieces?
    
    %\item[]
    \smallskip\noindent\textsc{Disjoint projections}: Given a set $S$ of $n$ convex objects, determine a line such that the set objects project disjointly on that line.
%\end{itemize}

 \smallskip\noindent
The \textsc{Polygon Cutting} problem is known to be 3SUM-hard~\cite{ruci2008cutting}.  \textsc{Disjoint projections} can be solved in $O(n^2 \log n)$ time in classical computing model~\cite{edelsbrunner1983graphics}, but %to the best of our knowledge, 
 it is not yet known to be 3SUM-hard.

We also show how the pair search can be further generalized for $d$-tuple search or in $\mathbb{R}^d$. % when $d$ is fixed.

\section{Preliminaries}
% \subsection{Problems}
% Here, we define the problems we examine in this paper.
% \begin{itemize}
%     \item POINT-ON-3-LINES: Given a set of lines, determine if there is a point that lies on q of them.
%     \item POINT-ON-Q-LINES: Given a set of lines and an integer q, decide if there is a point that lies on q lines. 
%     \item DEEPEST-REGION: Given a set of shapes on the plane, find the deepest region they form. Here, the \textit{depth} of the point is the number of shapes that contain it. We specifically examine squares and circles. 
%     \item DISK-PLACEMENT: Given a set of points \(P\) and a unit disk \(C\), place \(C\) such that it covers as many points as possible. This is dual to finding the deepest region given a set of unit disks centered at the points in \(P\).
% \end{itemize}

% \subsection{Quantum Model and Theorems}
\todo{Consider adding a description of the quantum model we use as done by Ambainis and Larka}
In this section, we describe some standard quantum procedures and tools from the literature that we will utilize to design our algorithms.
\begin{theorem}[Grover Search~\cite{grover1996fast}] Let $X = $ $\{x_{1}, $ $x_{2},$ $...,$ $x_{n}\}$ be a set of $n$ elements and let  $f: X \xrightarrow{} \{0, 1\}$ a boolean function. There is a bounded-error quantum procedure that can find an element $x \in X$ such that $f(x) = 1$ using $O(\sqrt{n})$ quantum queries.
\end{theorem}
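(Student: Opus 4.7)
The plan is to construct an explicit iterative quantum procedure whose amplitude concentrates on the marked set $M = \{x \in X : f(x) = 1\}$ after $O(\sqrt{n})$ oracle calls. First I would set up the ingredients: start from the uniform superposition $|\psi\rangle = \frac{1}{\sqrt{n}} \sum_{i=1}^{n} |x_i\rangle$, obtainable by Hadamard gates on $\lceil \log_2 n\rceil$ qubits, and convert the standard oracle $|x\rangle|b\rangle \mapsto |x\rangle|b \oplus f(x)\rangle$ into a phase oracle $O_f : |x\rangle \mapsto (-1)^{f(x)} |x\rangle$ via the usual ancilla-in-$|-\rangle$ trick. The second ingredient is the diffusion operator $D = 2|\psi\rangle\langle\psi| - I$, which is a reflection about $|\psi\rangle$ and can be implemented with Hadamards and a reflection about $|0\rangle^{\otimes \log n}$.

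Next I would analyze the Grover iterate $G = D \cdot O_f$ inside the two-dimensional invariant subspace spanned by the normalized marked state $|M\rangle = \frac{1}{\sqrt{|M|}} \sum_{x \in M} |x\rangle$ and its orthogonal complement $|M^\perp\rangle$ in $|\psi\rangle$. Writing $|\psi\rangle = \sin\theta\,|M\rangle + \cos\theta\,|M^\perp\rangle$ with $\sin\theta = \sqrt{|M|/n}$, a short calculation shows that $O_f$ is a reflection across $|M^\perp\rangle$ and $D$ is a reflection across $|\psi\rangle$, so their composition is a rotation by $2\theta$ in this plane. After $k$ iterations the state is $\sin((2k+1)\theta)\,|M\rangle + \cos((2k+1)\theta)\,|M^\perp\rangle$, and choosing $k = \lfloor \pi/(4\theta) \rfloor$ makes the overlap with $|M\rangle$ at least a constant, so measuring in the computational basis returns an element of $M$ with constant probability using $k = O(\sqrt{n/|M|}) = O(\sqrt{n})$ oracle calls.

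The main obstacle is that $|M|$ is unknown a priori, so we cannot directly pick the right $k$; an overshoot rotates past $|M\rangle$ and destroys the success probability. I would resolve this by the standard Boyer--Brassard--H\o{}yer--Tapp procedure: run Grover for a random number of iterations drawn uniformly from $\{0,1,\dots,\lceil \sqrt{n}\rceil\}$, and repeat with geometrically increasing bounds on the iteration count. A direct expectation argument shows a marked element is found with constant probability after $O(\sqrt{n})$ total oracle calls, regardless of $|M|$ (including correctly reporting failure when $M = \emptyset$ by verifying the output classically). Standard repetition then boosts the success probability to any desired constant, establishing the claimed $O(\sqrt{n})$ query bound.
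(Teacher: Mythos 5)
Your proof is correct and is the standard derivation of Grover search (the two-dimensional rotation analysis of the iterate $G = D \cdot O_f$, together with the Boyer--Brassard--H\o{}yer--Tapp randomized iteration count to handle unknown $|M|$, including the $M = \emptyset$ case via classical verification). The paper states this theorem as a black box with a citation to Grover's original work and gives no proof of its own, so there is nothing to contrast: your argument is precisely the one underlying the cited result.
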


\begin{theorem}[Amplitude Amplification~\cite{brassard2002quantum}] Let $A$ be a quantum procedure with a one-sided error and success probability of at least $\epsilon$. Then there is a quantum procedure $B$ that solves the same problem with a success probability $\frac{2}{3}$ invoking $A$ for $O(\frac{1}{\sqrt{\epsilon}})$ times.
\end{theorem}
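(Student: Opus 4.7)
The plan is to exploit the fact that any quantum procedure $A$ produces, on a fixed initial state $|0\rangle$, a pure superposition that decomposes as $A|0\rangle = \sin(\theta)\,|g\rangle + \cos(\theta)\,|b\rangle$, where $|g\rangle$ is a unit vector in the subspace of ``good'' (correct) outcomes and $|b\rangle$ lies in the orthogonal ``bad'' subspace. The one-sided error hypothesis says that a measurement of $A|0\rangle$ returns a correct answer with probability $\sin^2\theta \geq \epsilon$, so $\theta \geq \arcsin\sqrt{\epsilon} = \Omega(\sqrt{\epsilon})$. Because the good/bad decomposition is fixed by the problem, the state will remain in the two-dimensional real subspace $\mathrm{span}(|g\rangle,|b\rangle)$ under any sequence of the reflections I am about to construct, so I can reason purely in plane geometry.

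Next, I would build two reflections that collectively rotate the state toward $|g\rangle$. The first, $R_g$, flips the sign of good basis states and fixes bad ones; it is implementable from a checker that verifies correctness of a candidate answer, which exists since $A$ has one-sided error. The second is the reflection about $A|0\rangle$, which can be written as $R_\psi = A\bigl(2|0\rangle\langle 0| - I\bigr)A^{-1}$ and thus costs one application of $A$ and one of $A^{-1}$. A standard plane-geometry argument shows that the composition $G = -R_\psi\, R_g$ acts as a rotation by angle $2\theta$ inside $\mathrm{span}(|g\rangle,|b\rangle)$, so after $k$ iterations of $G$ applied to $A|0\rangle$ the state is $\sin\bigl((2k+1)\theta\bigr)\,|g\rangle + \cos\bigl((2k+1)\theta\bigr)\,|b\rangle$.

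Choosing $k = \bigl\lfloor \pi/(4\theta)\bigr\rfloor$ makes $(2k+1)\theta$ lie within $\theta$ of $\pi/2$, so a final measurement outputs a good answer with probability at least $\cos^2\theta \geq 1-\epsilon$, which is certainly at least $2/3$ for the interesting regime of small $\epsilon$; the total number of uses of $A$ and $A^{-1}$ is $2k+1 = O(1/\theta) = O(1/\sqrt{\epsilon})$, matching the claim. For large $\epsilon$ the stated bound is vacuous and a single invocation of $A$ already suffices, so I would handle that as a boundary case.

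The main obstacle is that $\theta$ (equivalently, the precise value of $\epsilon$) is typically not known in advance, so I cannot simply pick the optimal $k$. The standard remedy, which I would invoke, is to sample $k$ uniformly from $\{0,1,\dots,m-1\}$ while geometrically doubling a guess $m$ for the critical iteration count; a short averaging calculation shows that this preserves the $O(1/\sqrt{\epsilon})$ bound on expected queries and yields constant success probability. A final routine step is to boost that constant up to exactly $2/3$ by a constant number of independent repetitions combined with a verification check, which is available because the error is one-sided.
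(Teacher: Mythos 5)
The paper does not prove this statement at all: it is imported verbatim as a known black-box result, cited to Brassard, H{\o}yer, Mosca and Tapp~\cite{brassard2002quantum}, and used only as a tool in the analysis of RQS. Your proposal is, in substance, the canonical proof from that reference: the two-dimensional good/bad decomposition, the iterate $G=-R_\psi R_g$ acting as a rotation by $2\theta$, the count $O(1/\theta)=O(1/\sqrt{\epsilon})$, the exponential-guessing trick for unknown $\theta$, and the final boosting via verification (which is exactly where the one-sided-error hypothesis is used). So there is no divergence from the paper to report; the comparison is between your sketch and the standard literature proof, and they coincide.

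One inequality in your third paragraph is stated backwards and worth fixing. From $\sin^2\theta\ge\epsilon$ you get $\cos^2\theta\le 1-\epsilon$, not $\ge$; the correct statement is that after $k=\lfloor \pi/(4\theta)\rfloor$ iterations the success probability is at least $\cos^2\theta = 1-a$, where $a=\sin^2\theta$ is the \emph{actual} initial success probability, which may be close to $1$ even when $\epsilon$ is tiny (since $\epsilon$ is only a lower bound on $a$). The standard repair, which your own last paragraph already supplies the ingredients for, is to observe that either $a\ge 1/2$ (so one plain run of $A$ plus verification succeeds with probability $\ge 1/2$) or $1-a>1/2$ (so the amplified procedure does), and then a constant number of verified repetitions lifts $1/2$ to $2/3$. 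With that correction your argument is complete and matches the cited theorem, including the constant factor hidden in $O(1/\sqrt{\epsilon})$.
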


By repeating Amplitude Amplification a constant number of times we can achieve a success probability of $1-\epsilon$ for any $\epsilon > 0$. This technique has been widely used in the literature to speed up classical algorithms.

\begin{algorithm}
\caption{Recursive-Quantum-Search (RQS)}
\begin{algorithmic}[1] 
\State\textbf{Procedure} RQS($M,n, \delta,\epsilon$), where   
    $M$ is a problem of size at most $n$,  $\delta$ is a positive constant, and $\epsilon$ is an allowable error parameter.
    \If{  $|M|<k$, where  \(k = n^{1/\alpha} \cdot\delta (\log  n  + \log \epsilon^{-1})\) and $\alpha \in  O(\sqrt{ \log n/ \log\log n})$, }
        \State Search for a solution by exhaustive search
    \Else
        \State Let $R_1,\ldots, R_t$ be a decomposition  of the   problem $M$ into $t$   subproblems, where  $t\in O(k^2)$.
         \indent \indent Search  for a solution  that spans two or more  subproblems. 
        \If{any of the subproblems is  larger 
          than $\frac{|M|}{k}\cdot \delta (\log  |M|  + \log \epsilon^{-1})$}
        \State return Error
        \Else
        \State Let $A$ be an algorithm that runs     RQS   ($R, n,\delta,\epsilon$)  recursively on  randomly chosen   \indent \indent subproblem $R$. Run $A$ with Amplitude Amplification  for a success probability at   least   \indent \indent \(1-\epsilon\).
        \EndIf
    \EndIf
\end{algorithmic}
\end{algorithm}

Algorithm 1 presents the Recursive-Quantum-Search (RQS) of Ambainis and Larka~\cite{ambainis2020quantum} for searching over a subdivision, but we slightly modify the description to present it in terms of subproblems. We first describe the idea and summarize it in a theorem (Theorem~\ref{gen}) so that it can be used as a black box. We then illustrate the concept using the Point-On-3-Lines problem.

The algorithm decomposes the problem into $O(k^2)$ subproblems, where $k$ is a carefully chosen parameter, and then checks whether there is a solution that spans at least two subproblems but does not evaluate the subproblems. If no such solution exists,  then the solution is determined by one of the subproblems. If all the subproblems are sufficiently small, then it searches for a solution over them using Grover search; otherwise, it returns an error. Consequently, one needs to show that the probability of a subproblem being large can be bounded by an allowable error parameter $\epsilon$, %. In other words,  with high probability, all the subproblems will be small, 
and hence, Grover search will ensure a faster running time.  
   
We now have the following theorem, which is inspired by Ambainis and Larka's~\cite{ambainis2020quantum} result, but we include it here for completeness.
 
%Therefore, to use this RQS framework, we need to find a subproblem decomposition, and check the algorithm returns an error is less than $c\ldot \epsilon$, where $c$ is a constant.

% \subsection{Tools}
% Here, we describe important results from \cite{ambainis2020quantum} which we will reference and expand upon. 
% \begin{theorem}
%     \cite{ambainis2020quantum} Given a set of lines \(S\) and a subset \(P \subset S\), we can compute the following in \(O(|S||P|^{2})\):
%     \begin{itemize}
%         \item The partition of the plane formed by lines from \(P\).
%         \item A set of triangulated regions \(R_{1}, ..., R_{t}\) formed from the partition invoked by \(P\). Note that \(t \in O(k^{2})\).
%         \item The set of lines from \(S\) that intersects the interior of a region \(R_{i}\), we will denote these sets \(s(R_{1}), ..., s(R_{t})\)
%         \item Whether an intersection of three lines from \(S\) occurs on the boundary of one of these regions. 
%     \end{itemize}
% \end{theorem}

\begin{theorem}\label{gen}
Let $M$ be a problem of size at most $n$. Assume that for every $k<|M|$, $M$ can be decomposed into $O(k^2)$ subproblems such that $M$ can be solved first by checking for solutions that span at least two subproblems (without evaluating the subproblems), and then, if such a solution is not found, applying a Grover search over these subproblems (when we evaluate the subproblems). 
%if M has an affirmative solution then the solution can either be computed without solving these subproblems in , or at least one subproblem has an affirmative solution and we can apply a Grover search over these subproblems to find the solution.
Furthermore, assume there exists a constant $\delta$ such that the probability for a subproblem to have a size larger than $\frac{|M|}{k}\cdot \delta (\log  |M|  + \log \epsilon^{-1})$ is at most $\epsilon$, where $\epsilon$ is an allowable error probability. 

If we can compute the problem decomposition and check whether there is a solution that spans at least two subproblems in $O(|M|^{1+o(1)}k^2)$ classical time, then RQS can solve $M$ in $O(n^{1+o(1)})$ quantum time.\todo{base case must be solved in poly time} 
\end{theorem}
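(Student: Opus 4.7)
The plan is to analyze RQS by induction on the problem size, establishing correctness and running time simultaneously. The first observation is structural: $k = n^{1/\alpha}\cdot\delta(\log n + \log\epsilon^{-1})$ is fixed once from the root size $n$ and does not shrink along the recursion, so every recursive call on a problem of size $|M|$ passes to subproblems of size at most $\frac{|M|}{k}\cdot\delta(\log|M|+\log\epsilon^{-1})$. Hence the problem sizes shrink geometrically by roughly a factor of $n^{1/\alpha}$ per level (up to polylogarithmic slack), and the recursion terminates within $O(\alpha)$ levels once sizes fall below $k$ and the exhaustive base case is invoked.

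For correctness I would use the hypothesis that at each call the probability of a subproblem exceeding the size bound is at most $\epsilon$, while Amplitude Amplification boosts the one-sided Grover error to at most $\epsilon$ with only a constant multiplicative overhead. A union bound over the at most $k^{O(\alpha)} = n^{1+o(1)}$ nodes of the recursion tree, combined with choosing $\epsilon$ polynomially small in $n$ (which inflates $\log\epsilon^{-1}$ only by $O(\log n)$, absorbed into the polylog budget of $k$), keeps the cumulative failure probability below any fixed constant, in particular below $1/3$.

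The running time satisfies a recurrence of the form
\[
T(|M|) \;\le\; c\,|M|^{1+o(1)}k^{2} \;+\; O(k)\cdot T\!\left(\tfrac{|M|}{k}\cdot\delta(\log|M|+\log\epsilon^{-1})\right),
\]
where the first term is the classical decomposition plus the cross-subproblem check, and the factor $O(k)$ counts the quantum queries Grover issues over $O(k^{2})$ subproblems. Unrolling and telescoping, the contribution of each of the $O(\alpha)$ levels is bounded by $n^{1+o(1)}\cdot k^{2}$, giving the total $T(n) = n^{1+o(1)}\cdot k^{2}\cdot\alpha$.

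The main obstacle is then the parameter tuning: certifying that $k^{2}\cdot\alpha = n^{o(1)}$ despite the polylogarithmic inflation of $k$. The choice $\alpha=\Theta(\sqrt{\log n/\log\log n})$ is the sweet spot where $n^{2/\alpha} = 2^{O(\sqrt{\log n\log\log n})} = n^{o(1)}$ while $\alpha$ itself is still $n^{o(1)}$, and simultaneously the polylog factor accumulated across $\alpha$ levels of shrinkage stays below $k$, so the base case is reached within the claimed depth. With this choice one concludes $T(n) = n^{1+o(1)}$, completing the argument.
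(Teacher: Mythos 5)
Your proposal is correct and follows essentially the same route as the paper: a level-by-level cost sum in which Grover contributes a $\sqrt{O(k^2)}=O(k)$ factor per level, the $n^{-j/\alpha}$ shrinkage cancels the $k^j$ blowup down to a $(\mathrm{polylog})^{j}$ ratio, and $\alpha=\Theta(\sqrt{\log n/\log\log n})$ is tuned so that $(\log n)^{\alpha}$ and $n^{O(1/\alpha)}$ are both $n^{o(1)}$. The only difference is that you additionally spell out the error/union-bound accounting over the recursion tree, which the paper's proof leaves implicit.
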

\begin{proof}
The first time RQS is called,   $M$ is the original problem with size $|M|=n$. 
% Note that we can formulate the time complexity $T$
% \[T(m)= \text{(Time to construct regions \& find boundary solution)}\]
% \[\text{+ (Grover Search over regions)\(\times\)(Recursive Time Complexity)}\]
%     \[T(m) = O(mk^{2}) + O(\sqrt{k^{2}})\times T( \frac{m}{k}\cdot \delta (\log  m  + \log \epsilon^{-1}))\]
% If \(k = n^{1/\alpha} \cdot\delta (\log  n  + \log \epsilon^{-1})\), then:
%     \[= O(mn^{(2/\alpha)+o(1)}  ) + O(k)\times T(mn^{-1/\alpha})\]
Since the recursion tree has a branching factor of $O(k^2)$, the number of problems at level $j$ is $C_1k^{2j}$, where $C_1$ is a constant.

We set $k$ to be $n^{1/\alpha} \cdot\delta (\log  n  + \log \epsilon^{-1})$,  where $\alpha \in O(\sqrt{\log n/\log \log n})$. At each recursion, the problem size decreases by a factor of  $n^{-1/\alpha}$, and at $j$th level, a problem has size at most $n^{1-j/\alpha}$. Since the cost of problem decomposition and checking whether a solution spans two or more subproblems is $O(|M|^{1+o(1)}k^2)$, using Grover search, the cost for level $j$ is $\sqrt{C_1k^{2j}}\cdot C_2(n^{1-j/\alpha}n^{2/\alpha}n^{o(1)})$,\todo{j:check carefully - we decided $n^o(1)$ subsumes polylog} %(ignoring poly-logarithmic factors of $n$), 
where $C_2$ is a constant. We sum the cost of all levels to bound   $T(n)$. %The following analysis ignores poly-logarithmic factors. 
  
%There are \((C_{1}k)^{2j}\) problems on recursion level \(j\). This is because each level creates \(O(k^{2})\) regions. Also, on recursion level \(j\), the problem size is at most \(n^{1-\frac{j}{\alpha}}\). Initially, at recursion level 0, the problem size is \(n\). At level 1 it is \(nn^{\frac{-1}{\alpha}}\), at level 2 it is \((nn^{\frac{-1}{\alpha}})n^{\frac{-1}{\alpha}}\), at level three it is \((nn^{\frac{-1}{\alpha}}n^{\frac{-1}{\alpha}})n^{\frac{-1}{\alpha}} = n^{1-\frac{3}{\alpha}}\). It follows that on recursion level j, the problem size is at most \(n^{1-\frac{j}{\alpha}}\). This means that:

\begin{align*}
T(n) &\le C_2 \sum_{j=0}^{\alpha} \sqrt{(C_1k)^{2j}}\left(n^{1-j/\alpha}n^{2/\alpha} n^{o(1)}\right)\\
&=  C_2 n^{1+(2/\alpha)+o(1)}\sum_{j=0}^{\alpha} \left(\frac{C_1k}{n^{1/\alpha}} \right)^{j}\\
&=   C_2 n^{1+(2/\alpha)+o(1)}\sum_{j=0}^{\alpha} \left(\frac{C_1n^{1/\alpha} \delta (\log  n  + \log \epsilon^{-1})}{n^{1/\alpha}} \right)^{j}\\
& \leq  C_2 n^{1+(2/\alpha)+o(1)}\sum_{j=0}^{\alpha} (C_3\log  n )^{j}\\
&\leq  C_2 \alpha(C_{3}\log n)^{\alpha}n^{1+(2/\alpha)+o(1)}\\
&=  C_2 \alpha \left(\frac{C_{3}\log  n }{n^{ 2/\alpha^2}} \right)^{\alpha} \left(n^{1+2/\alpha}n^{(2/\alpha)+o(1)} \right)
\end{align*} 
 
If \(\alpha{=}\sqrt{\frac{2\log (n)}{\log (C_{3}) + \log\log (n)}}\), then  \(n^{\frac{2}{\alpha^{2}}} { =} C_{3}\log(n)\). Hence 
    $T(n) =  C_2 \alpha n^{1+\frac{4}{\alpha} +o(1)}  
  = O(n^{1+o(1)})$.
  
\end{proof}

 \begin{figure*}[pt]
     \centering
     \includegraphics[width=\textwidth]{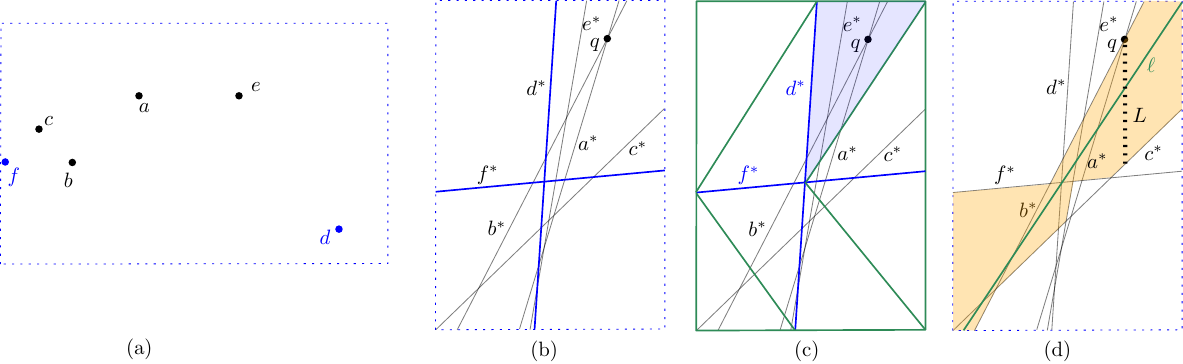}
     \caption{(a)--(b) Illustration for a point set and its corresponding lines in the dual plane. (c) Illustration for $T_k$, where $k=2$, with a face $R$ shown in blue shaded region. The dual  and supporting edges are in blue and green, respectively. (d) Illustration for the zone of a supporting line.}
     \label{fig:boundary}
 \end{figure*}

We can use Theorem~\ref{gen} as a black box. 
For example, consider the case of Point-On-3-Lines problem. Let $S$ be the set of input  lines. To construct subproblems, choose $k$ lines randomly, then create an arrangement of these lines, and finally, triangulate the arrangement to obtain $O(k^2)$ faces. %Each face (i.e., connected closed region) of the triangulation corresponds to a subproblem. 
{\color{black}Specifically, a subproblem corresponding to a closed face $F$ consists of the input lines that bound $F$ and the lines that intersect the interior of $F$}. If a solution point (i.e., a common point on three lines) spans at least two {\color{black}closed} faces, then it must lie on an edge or coincide with a vertex of the triangulation, which can be checked in $O(|S|k^2 \log n)$ time without evaluating the subproblems. If a solution point is not found, then we can search over the subproblems using Grover search. Ambainis and Larka's~\cite{ambainis2020quantum} showed that there exists a constant $\delta$ such that the probability of a subproblem to contain more than $\delta \frac{|S|}{k}(\log (|S|) + \log (\epsilon^{-1}))$ lines is bounded by $\epsilon$, and hence, we can apply Theorem~\ref{gen}. 
The following lemma, which is adapted from~\cite{ambainis2020quantum}, will be helpful for us to argue about subproblem sizes.
\begin{lemma}[Ambainis and Larka~\cite{ambainis2020quantum}]\label{subroutine}
Let $S$ be a set of straight lines and let $A$ be an arrangement of $k$ lines that are randomly chosen from $S$. Let  $\mathcal{T}$ be a planar subdivision of size $O(k^2)$ obtained by adding straight line segments to $A$ such that each face of $\mathcal{T}$ is of size $O(1)$. 
%If no vertex of $T$ lies on three lines \todo{is this necessary?J: otherwise,  $X_i$ $X_j$ should have a conflict in their proof}from $S$,  then  the probability of a face intersecting more than 
Then the probability of a closed face of $T$, without its vertices,  intersecting more than $\delta \frac{|S|}{k}(\log (|S|) + \log (\epsilon^{-1}))$ lines of $S$ is bounded by $\epsilon$, where $\delta$ is a positive constant and $\epsilon$ is an allowable error probability. 
\end{lemma}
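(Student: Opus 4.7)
My plan is to prove this by the classical Clarkson--Shor random sampling argument. For each closed face $F$ of $\mathcal{T}$ I would first identify a constant-size \emph{defining set} $D(F) \subseteq S$ such that $F$ is combinatorially determined by $D(F)$: namely the sample lines whose portions form the $O(1)$ line-edges of $F$, together with the $O(1)$ sample lines needed to locate the endpoints of the added triangulation segments (the latter being possible for any canonical decomposition, e.g.\ a bottom-vertex triangulation of the cells of $A$). Call the uniform upper bound on $|D(F)|$ the constant $d$.

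Next, I would fix any particular $d$-subset $D \subseteq S$ that could serve as a defining set, and bound the probability that the corresponding face appears in $\mathcal{T}$ and is crossed by more than $t := \delta(|S|/k)(\log |S| + \log \epsilon^{-1})$ lines of $S$. Two conditions must then hold for the random sample of size $k$ drawn from $S$: all $d$ lines of $D$ must be selected (so the face is present in $\mathcal{T}$), and none of the $>t$ lines crossing the face's interior may be selected (otherwise they would bound it rather than cross it). A standard hypergeometric estimate bounds the joint probability by
\[
\binom{|S|-d-t}{k-d}\bigg/\binom{|S|}{k} \;\leq\; \left(\frac{k}{|S|}\right)^{d}\!\left(1-\frac{t}{|S|}\right)^{k-d} \;\leq\; \left(\frac{k}{|S|}\right)^{d}\! e^{-t(k-d)/|S|}.
\]

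Finally, I would union-bound over the at most $\binom{|S|}{d} = O(|S|^d)$ choices of $D$, yielding a total failure probability of $O(k^d)\cdot e^{-tk/(2|S|)}$. Substituting the chosen $t$ shows this is at most $\epsilon$ once $\delta$ is chosen large enough to dominate the $d\log|S|$ factor contributed by the union bound. I expect the main obstacle to be the first step rather than the calculation: the lemma as stated permits arbitrary added segments, so the bound on $|D(F)|$ must be justified via the specific triangulation scheme, which is implicit in Ambainis and Larka's usage. Making this explicit---either by restricting to a canonical subdivision or by charging each added segment directly to an $O(1)$-size witness in the random sample---is where I would spend the bulk of the write-up; the exponential-decay calculation above is then mechanical.
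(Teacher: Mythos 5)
Your argument is correct in outline, but it takes a genuinely different route from the paper's. The paper does not prove this lemma via defining sets; its proof (given for the pseudoline generalization in Appendix~B, of which this statement is the $C=1$ special case) is an \emph{edge-based} argument: for each line of $A$ and each added segment $\ell$, order the intersection points of $\ell$ with all of $S$, color the sampled ones white and the rest black, call $\ell$ \emph{bad} if it carries $L$ consecutive black intersections, bound $\Pr[\ell \text{ is bad}]$ by a union bound over the $O(|S|)$ starting positions of such a run times a hypergeometric term $\binom{|S|-L}{|W|}/\binom{|S|}{|W|}$, and then union-bound over the $O(k^2)$ lines and segments of $\mathcal{T}$. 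A face with $\beta=O(1)$ edges, none of whose supporting lines is bad, is then crossed by at most $\beta(L-1)$ lines of $S$, since every crossing line must cross some boundary edge and the unsampled intersections interior to an edge form a single black run. Your Clarkson--Shor version instead union-bounds over the $O(|S|^d)$ candidate defining sets and controls each face's conflict list directly. Both give the same bound; the paper's route needs only that each face has $O(1)$ edges and has a smaller union bound (over canonical runs rather than $d$-subsets), while yours is the more textbook argument but obliges you to exhibit a constant-size defining set per face. The obstacle you flag---that the lemma permits arbitrary added segments, so a face need not be determined by $O(1)$ sample lines---is real, and it is in fact latent in the paper's proof too: there $\Pr[\ell \text{ is bad}]$ is computed as if the added segment $\ell$ were fixed independently of the sample, which is legitimate only when each added segment comes from a family determined by $O(1)$ sampled lines (as it does in all of the paper's applications, e.g.\ axis-parallel extensions through arrangement vertices). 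So your proposed fix---restrict to a canonical decomposition, or charge each added segment to an $O(1)$-size witness in the sample---is exactly the right one, and making it explicit would also tighten the paper's own argument.
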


{\color{black}The reason to restrict the attention to a closed face without its vertices (in Lemma~\ref{subroutine}) is to avoid the degenerate case with many lines intersecting at a common point. In such a scenario,  a random sample of $S$  is likely to have many lines passing through such a point, yielding a closed face intersected by many lines. Ambainis and Larka's proof~\cite{ambainis2020quantum} did not explicitly discuss this scenario.} 
% The last line is important - otherwise coloring in the larka's paper is ambiguous.
%the disk problem now has a challenge- what happens if the disks meet at a common point?

\section{Finding a Triangle of Area at most $q$}

A well-known approach for finding a minimum area triangle among a set $S$ of $n$ points~\cite{DBLP:journals/jcss/EdelsbrunnerG89} is to use point-line duality. For each point $p=(p_{x}, p_{y})$, construct a line  $p^{*}$, which is defined as $y = p_{x}x - p_{y}$ in the dual plane. Figure~\ref{fig:boundary}(b) illustrates a set of lines corresponding to the points of Figure~\ref{fig:boundary}(a). The algorithm uses the property that the line with the smallest vertical distance from the intersection point of a pair of lines $a^*$ and $b^*$ determines a triangle $\Delta abc$ that minimizes the area over all the triangles that must include $a$ and $b$. Consequently, one can first construct a line arrangement in the dual plane and then examine its faces to find a minimum area triangle in $O(n^2)$ classical time.

%If two lines $p_{1}^{*}$ and $p_{2}^{*}$ intersect at a point $z$, then we can find the minimum area triangle formed with $p_1$ and $p_{2}$ by finding the line with the smallest vertical distance from their intersection point $z$. 

We now describe our approach using quantum computing. One can check whether three lines in the dual plane intersect at a common point in $O(n^{1+o(1)})$ quantum time~\cite{ambainis2020quantum}, and if so, it would correspond to a triangle of 0 area. Therefore, we may assume that the lines are in a general position.

We first discuss the concept of `zone' in an arrangement and some properties of a minimum area triangle. Let 
$A_k$ be an arrangement of a set $S^*_k$ of $k$ randomly chosen dual lines, and let  $T_k$ be a triangulation obtained from $A_k$ in $O(k^2)$ time. The \emph{zone} of a line $\ell$ is the set of closed faces in $A_k$ intersected by $\ell$.\todo{If a face is only intersected by l at a vertex, is it included in the zone? j: added 'closed'} %from $S^*$ and constructing triangulated regions, as done in POINT-ON-3-LINES.
%For each connected region $R_i$ in $T_k$, we construct a set of lines $s(R_i)$ that intersects $R_i$. Here, each region is bounded by 3 lines. We say a line is \textit{real} if it corresponds to a line from \(S\), otherwise, it is a triangulation line and we say that it is\textit{triangulation}. 
Figure~\ref{fig:boundary}(b)--(c) illustrates a scenario where two lines $d^*,f^*$ have been chosen to create $T_k$. %Note that each face $R$ in $T_k$ is bounded by three line segments or edges. 
For each face $R$ in $T_k$, $s(R)$ denotes the dual lines (a subset of $S^*_k$) that bound $R$ and the dual lines that intersect the interior of $R$. %Since the lines are in general position, they do not pass through the vertices of $R$ (except for the lines that bounds $R$). 
We refer to an edge of $T_k$ as a \emph{dual edge} if it corresponds to a dual line of a point in $S$, otherwise, we call it a \emph{supporting edge}. The line determined by a supporting edge is called a \emph{supporting line}. We now have the following property of a minimum area triangle.

\begin{lemma}\label{candidate}
Let $\Delta abc$ be a minimum area triangle. Let $q$ be the intersection point of $a^*$ and $b^*$. Assume that $q$ is not a vertex of $T_k$ and $q$ lies interior to a face $R$ of $T_k$ (e.g.,   Figure~\ref{fig:boundary}(c)). Then either one of the following or both hold: (a) $c^*$ belongs to $s(R)$. %(b) $c^*$ is a dual line of $s(R)$. 
(b) $c^*$ belongs to $s(Z)$, where $Z$ is a zone of a supporting line of $R$. \todo{c* could also be one of the lines of R, is this captured in case (a)? j:yes}%Then at least one of the following holds: (a) There exists a face $R$ in $T_k$ such that $\{a^*,b^*,c^*\}\subseteq s(R)$. (b) $S^*k$ includes at least one line from $\{a^*,b^*,c^*\}$. (c) There exists a zone $Z$ of a triangulation-line such that  $\{a^*,b^*,c^*\}\in s(Z)$. 
\end{lemma}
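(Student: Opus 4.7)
The plan is to leverage the classical Edelsbrunner--Guibas characterization recalled just above the lemma: among all triangles with two prescribed vertices $a,b$, the area is minimized by choosing $c$ so that $c^*$ is at minimum vertical distance from $q=a^*\cap b^*$. Because $\Delta abc$ is itself a \emph{global} minimum-area triangle, I will first argue that $c^*$ must be the vertically closest dual line to $q$ over the whole set $S^*$: any other $d^*\in S^*$ lying strictly between $q$ and $c^*$ on the vertical line $v$ through $q$ would yield a strictly smaller triangle $\Delta abd$, contradicting the minimality of $\Delta abc$. This single minimality observation is the sole engine of the proof, and I expect the main subtlety to lie in applying it twice, once at the level of $T_k$ and once at the level of $A_k$.

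With this tool I set $q'=c^*\cap v$ and, WLOG assuming $q'$ lies above $q$, trace the vertical segment $\sigma=\overline{qq'}$ starting from $q$ inside $R$. If $\sigma\subseteq\overline R$ then $c^*$ meets $\overline R$ at $q'$, so $c^*\in s(R)$ and case (a) holds. Otherwise $\sigma$ exits $R$ through some edge $e$ of $R$. If $e$ is a dual edge lying on some $d^*\in S_k^*$, the minimality observation forces $d^*=c^*$; then $c^*$ is a bounding dual edge of $R$, so $c^*\in s(R)$ and case (a) holds again. The only remaining case is that $e$ is a supporting edge, whose underlying supporting line $\ell$ is by definition a supporting line of $R$.

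To establish (b) in this remaining case I would pass from the triangulation $T_k$ to the coarser arrangement $A_k$. Let $F^A$ be the $A_k$-face containing $R$. Because $\ell$ contains the interior diagonal $e\subset F^A$, we have $\ell\cap F^A\neq\emptyset$, so $F^A\in Z_\ell$. Now continue $\sigma$ beyond $e$: either $\sigma$ reaches $q'$ while still inside $F^A$, in which case $c^*$ meets the interior of $F^A$ at $q'$ and $c^*\in s(F^A)\subseteq s(Z_\ell)$; or $\sigma$ crosses some boundary edge of $A_k$, which lies on a sample line $d^*\in S_k^*$, and the minimality observation applied once more forces $d^*=c^*$, so that $c^*$ bounds $F^A$ and again $c^*\in s(F^A)\subseteq s(Z_\ell)$. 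In either subcase conclusion (b) holds.

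The main obstacle I anticipate is not in any individual step but in the bookkeeping between the two subdivisions: the trace $\sigma$ lives in $T_k$, whereas the zone conclusion is stated in $A_k$. The argument goes through precisely because crossing from one $T_k$-face to an adjacent one through a supporting edge keeps us inside the same $A_k$-face, so the single supporting line $\ell$ of $R$ controls the whole subsegment of $\sigma$ until $\sigma$ first meets a line of $S_k^*$, at which point minimality identifies that line as $c^*$ itself. Degeneracies such as $q'$ sitting on a $T_k$-edge or $c^*$ passing through a vertex of $A_k$ should be absorbed by the general-position convention (no three dual lines concurrent) already in force from the preceding paragraph.
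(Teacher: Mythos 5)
Your proof is correct and follows essentially the same route as the paper's: trace the vertical segment from $q$ to $c^*$, use global minimality of $\Delta abc$ to rule out any other dual line crossing it, and conclude that the segment can only leave $R$ through a supporting edge whose zone must then capture $c^*$. Your version is slightly more explicit about the passage from $T_k$ to the containing face of $A_k$, but the underlying argument is identical.
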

\begin{proof}
%If 
Assume that (a) does not hold. We now show that (b) must be satisfied. %then $c^*$ does not intersect $R$. 
Consider a vertical line segment $L$ starting from $q$ and ending on $c^*$. Since $c^*$ minimizes the vertical distance from $q$, no other dual edge can intersect $L$ (e.g., Figure~\ref{fig:boundary}(d)). Since $q$ is enclosed by $R$ and since $c^*$ is outside of $R$, there must be a supporting edge $\ell$ on the boundary of $R$ that intersects $L$. If the zone of the corresponding supporting line does not contain $c^*$, then we can find a dual line other than $c^*$ that intersects $L$, which contradicts the optimality of $\Delta abc$. Figure~\ref{fig:boundary}(d) illustrates the zone of $\ell$, which is shaded in orange. 
\end{proof}

% Let the lines \(A\) and \(B\) be involved in the solution. Then their intersection point must lie on the boundary or interior of one of the regions. If it lies on the interior of one of the regions then the third line is the line with the minimum vertical distance from the intersection. WLOG assumes this third line lies above the intersection point. Then we have three cases: 
% \begin{enumerate}
%     \item The closest line is contained in \(s(R_{i}\), then the solution is local.
%     \item The closest line is a real boundary line, then the solution can be found by examining the zone of the boundary line. 
%     \item The closest line is a triangulation line, then the solution can be found by examining the zone of the boundary line.
% \end{enumerate}

We now show how to leverage Theorem~\ref{gen}. Let $R_1,\ldots,R_t$ be the faces of $T_k$. We choose $s(R_1),\ldots,s(R_t)$ as the subproblems. By Lemma~\ref{subroutine}, the probability of a  subproblem being large is bounded by $\epsilon$. In Lemma~\ref{lem:suplem}, we show how in $O(nk^2\log n)$ time, one can check whether there is a triangle $\Delta abc$ of area at most $q$ such that no subproblem contains all three dual lines $a^*,b^*,c^*$.  Consequently, we obtain Theorem~\ref{thm:dual}.

\begin{lemma}\label{lem:suplem}
    A triangle that has an area of at most $q$ and spans at least two subproblems can be computed in $O(nk^2 \log n)$ time. 
\end{lemma}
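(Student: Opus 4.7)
The plan is to use Lemma~\ref{candidate} to restrict, for each face $R$ of $T_k$, the candidate third lines $c^*$ of a spanning minimum-area triangle with $q=a^*\cap b^*\in R$ to a precomputable set, namely $\bigcup_i s(Z_{\ell_i})\setminus s(R)$, where $\ell_1,\dots,\ell_t$ (with $t\le 3$) are the supporting lines of $R$ and $Z_{\ell_i}$ are their zones in $A_k$.  Since $a^*,b^*\in s(R)$ whenever $q$ lies in the interior of $R$, a triangle that spans at least two subproblems but satisfies the hypotheses of Lemma~\ref{candidate} must have $c^*$ lying in this restricted set.

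First I would construct $T_k$ in $O(k^2)$ time and compute $s(R)$ for every face $R$ by sweeping the $n$ dual lines against each face, giving $O(nk^2)$ preprocessing.  Next, for each of the $O(k^2)$ supporting edges, I would compute its zone in $A_k$: the Zone Theorem bounds each zone by $O(k)$ faces, and Lemma~\ref{subroutine} bounds each face's subproblem by $O((n/k)\log n)$ with high probability, so $|s(Z_{\ell})|=O(n\log n)$ and the aggregate preprocessing is $O(nk^2\log n)$.  For each face $R$, I then build a search structure on $\bigl(\bigcup_i s(Z_{\ell_i})\bigr)\setminus s(R)$ supporting the query ``closest line above/below a query point $q$'' in $O(\log n)$ time; with Lemma~\ref{candidate}, one such query per candidate pair $(a^*,b^*)$ is enough to decide whether that pair yields a spanning triangle of area at most $q$.

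Finally, I would enumerate the candidate pairs by walking the arrangement of $s(R)$ clipped to $R$, in the style of the classical Edelsbrunner--Guibas $O(n^2)$ minimum-area-triangle algorithm: each pair-intersection $q\in R$ is visited in amortized $O(1)$ time via adjacency in the clipped arrangement, and the closest-above/below query is performed at each vertex.  If any query returns a line $c^*$ whose induced triangle has area at most $q$, I output it; otherwise I return ``none''.

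The main obstacle is squeezing the per-face cost into roughly $O(n\log n)$ so that the total is $O(nk^2\log n)$ rather than the naive $O(|s(R)|^2\log n)$ that summing the clipped arrangement over all faces would produce.  The key tool will be the $O((n/k)\log n)$ bound of Lemma~\ref{subroutine} combined with the observation that pairs whose globally vertically closest line lies in $s(R)$ (Case~(a) of Lemma~\ref{candidate}) cannot contribute a spanning triangle and can be discarded in $O(\log n)$ time per pair using the zone data structures; carefully charging the remaining arrangement walk against zone-edge visits, along lines analogous to the Zone Theorem amortization, should give the advertised bound.
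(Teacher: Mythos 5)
Your starting point matches the paper's: by Lemma~\ref{candidate}, a spanning solution must fall under case~(b), so it suffices to look at zones of supporting lines. But the algorithmic realization has two genuine gaps. First, you posit a structure over the candidate set $\bigl(\bigcup_i s(Z_{\ell_i})\bigr)\setminus s(R)$ answering ``closest line above/below a query point'' in $O(\log n)$ time after near-linear preprocessing. That query is vertical ray shooting in an arrangement of lines; with only near-linear preprocessing it costs roughly $O(\sqrt{m})$ per query, and getting $O(\log m)$ requires building the full (quadratic-size) arrangement of those $m = O(n\log n)$ candidate lines, which already blows the budget. Second, your enumeration of candidate pairs $(a^*,b^*)$ with intersection inside $R$ visits up to $\Theta(|s(R)|^2)$ vertices per face; summed over the $O(k^2)$ faces this is $\Theta(n^2)$ up to log factors, and you explicitly leave the amortization that would rescue this (``charging the remaining arrangement walk against zone-edge visits'') unresolved. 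That charging argument is precisely the hard part, and it is not clear it can be made to work in the per-face organization you chose, since the pairs being enumerated are vertices of the arrangement of $s(R)$, not edges of any zone.

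The paper sidesteps both issues by organizing the computation around supporting lines rather than faces of $T_k$. For each of the $O(k^2)$ supporting lines $e$, it builds the zone of $e$ in the arrangement of \emph{all $n$ dual lines}, which has total complexity $O(n)$ by the Zone Theorem and can be constructed in $O(n\log n)$ time. The key geometric observation is that for a spanning minimum-area triangle, the vertical witness segment from $q=a^*\cap b^*$ to $c^*$ crosses $e$ and no other dual line, so $q$ is a vertex and (a piece of) $c^*$ is an edge of one and the same convex face of that zone, on opposite sides of $e$. Hence it suffices, per zone face $F$, to find the vertex--edge pair on opposite sides of $e$ minimizing vertical distance, which sorted upper and lower envelopes deliver in $O(|F|\log|F|)$ time, for $O(n\log n)$ per supporting line and $O(nk^2\log n)$ overall. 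You would need to import this ``single zone face of the full arrangement'' observation (or an equivalent substitute) to close your proof; as written, the claimed running time is not established.
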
 
\begin{proof}
Each candidate triangle $\Delta abc$ satisfies the property that the intersection point $q$ of two of its dual lines lies in some face $R$ and the third dual line does not intersect $s(R)$. Here the condition (b) of Lemma~\ref{candidate} must hold and it suffices to examine the zone of each supporting line of $R$. We thus check the zones of all the supporting lines of $T_k$ as follows. Specifically, given an arbitrary line, its zone in an arrangement of $n$ lines can be constructed in $O(n\log n)$ time~\cite{wang2022simple}. %If $e$ is a dual line, then we find the vertex with the minimum vertical distance among all the vertices of the lower and upper envelope of $Z_e$ in $O(n)$ time. 
Let $e$ be a supporting line and let $Z_e$ be its zone. We search over all the faces of $Z_e$ to find a (vertex, edge) pair, i.e., $(v,L)$, such that they lie on opposite sides of $e$ and minimize the vertical distance from $v$ to $L$. To process a face $F$ we construct two arrays $L_u$ and $L_b$. Here $L_u$ ($L_b$) is an array obtained by sorting the vertices on the upper (lower) envelope of $F$ using x-coordinates in $O(|F|\log |F|)$ time. Since $F$ is convex, for each vertex $q$ in $L_u$ ($L_b$), we can use $L_b$ ($L_u$) to find the dual line that has the lowest vertical distance from $q$ in $O(\log |F|)$ time. Since the number of edges in a zone is $O(n)$~\cite{chazelle1985power}, the total time required for processing all the faces is at most $O(n\log n)$. For $O(k^2)$ supporting lines, the running time becomes $O(k^2n\log n)$.
\end{proof}

\begin{theorem}
    \label{thm:dual}
Given a set $S$ of $n$ points,  one can determine whether there is a triangle with area at most $q$ in \(O(n^{1+o(1)})\) quantum time.    
\end{theorem}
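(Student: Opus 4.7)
The plan is to invoke Theorem~\ref{gen} as a black box, with the problem $M$ taken to be the set of dual lines $\{p^* : p \in S\}$ and the decomposition being the one constructed from the triangulation $T_k$ of the arrangement of $k$ randomly chosen dual lines. Concretely, for the $t = O(k^2)$ faces $R_1, \ldots, R_t$ of $T_k$, the subproblems are the sets $s(R_1), \ldots, s(R_t)$ already defined above. Before recursing, I need to handle the case where a candidate triangle does not live entirely inside any single subproblem; this is exactly what Lemma~\ref{lem:suplem} supplies, giving an $O(nk^2 \log n) = O(n^{1+o(1)}k^2)$-time procedure to find such a spanning triangle of area at most $q$ (or certify none exists).

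First, I would verify the two hypotheses of Theorem~\ref{gen}. The decomposition and the spanning-triangle check both run within the required $O(|M|^{1+o(1)} k^2)$ classical budget by Lemma~\ref{lem:suplem}. The subproblem-size tail bound is supplied by Lemma~\ref{subroutine}: with the $k$ sample lines chosen uniformly at random from the current set of dual lines, the probability that any fixed face of $T_k$ (minus its vertices) is intersected by more than $\delta \frac{|M|}{k}(\log |M| + \log \epsilon^{-1})$ dual lines is at most $\epsilon$. Since a minimum-area triangle, by general position, has its critical vertex $q$ in the interior of some face and its third dual line captured either by $s(R)$ or by $s(Z)$ of a supporting zone (Lemma~\ref{candidate}), restricting the recursion to a single randomly chosen subproblem $s(R)$ is sound: the spanning-triangle step already disposed of the cases in which the solution was not entirely contained in a single subproblem, including the ``zone'' case from Lemma~\ref{candidate}(b).

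With both hypotheses in place, I would simply invoke Theorem~\ref{gen}, which yields the quantum running time $O(n^{1+o(1)})$ with the stated choice $k = n^{1/\alpha}\delta(\log n + \log \epsilon^{-1})$ and $\alpha \in O(\sqrt{\log n / \log\log n})$. A small preliminary step I would include is to first run the $O(n^{1+o(1)})$ quantum algorithm for Point-On-3-Lines of Ambainis and Larka on the dual lines; if three dual lines are concurrent, the corresponding three primal points are collinear and return area $0 \le q$, which lets me then assume general position in the rest of the argument.

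The main obstacle I anticipate is not the application of Theorem~\ref{gen} itself but rather ensuring the spanning-triangle check in Lemma~\ref{lem:suplem} truly covers every case that the recursion skips. In particular, the subproblem decomposition is based on faces $R$ of $T_k$, whereas Lemma~\ref{candidate} allows the third dual line $c^*$ to live in a whole zone $Z$ of a supporting line of $R$, not merely in $s(R)$. I would therefore highlight that Lemma~\ref{lem:suplem} sweeps through the zone of every supporting line of $T_k$, so any candidate triangle whose three dual lines are not all contained in a single $s(R_i)$ is caught at the current recursion level; this is what legitimizes recursing into a single randomly chosen subproblem and lets Theorem~\ref{gen} close the argument.
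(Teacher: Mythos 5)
Your proposal is correct and follows essentially the same route as the paper: it invokes Theorem~\ref{gen} with the subproblems $s(R_1),\ldots,s(R_t)$ from $T_k$, uses Lemma~\ref{subroutine} for the subproblem-size bound, Lemma~\ref{lem:suplem} (justified by Lemma~\ref{candidate}) for the spanning-triangle check, and the same preliminary Point-On-3-Lines reduction to assume general position. No gaps to report.
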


%To find the solution with a real boundary line \(b\), we construct its zone using lines from \(S\) in \(n\log n\) \cite{wang2022simple}. Since \(b\) is involved in the solution, we can iterate through each of its intersection points and then examine the adjacent face to find the closest vertical distance from each intersection point. To find the solution of a triangulation line \(t\) is a bit trickier. Here, we want to find any solution that may involve lines from adjacent regions that are separated by the triangulation line. Similarly, we construct the zone of \(t\) using lines from \(S\). The zone will consist of faces on two sides of \(t\). Since \(t\) is a triangulation line we can consider the faces as one. Then, for each polygon face, we want to find the minimum vertical distance from a vertex to another line in the face. We will do this with a plane sweep method. 

\section{Finding a Unit Disk with at least $q$ Points}

Let $S$ be a set of $n$ points and consider a set $\mathcal{D}$ of $n$ unit disks, where each disk is centered at a distinct point from  $S$. 
Note that to solve \textsc{$q$-Points in a Disk}, it suffices to check whether there is a point $r$ that hits at least $q$ disks in $\mathcal{D}$. However, searching for $r$ using Theorem~\ref{gen} requires tackling some challenges. First, we need to create a problem decomposition, where the probability of obtaining a large subproblem is bounded by an allowable error probability. This requires creating a subdivision (possibly with curves) where the size of each region (corresponding to a subproblem) is $O(1)$. Second, we need to find a technique to check for solutions that span two or more subproblems. 

Consider an arrangement $A_k$ of $k$ randomly chosen disks from $\mathcal{D}$. We first discuss how the regions of $A_k$ can be further divided to create a subdivision $A^\prime_k$ where the size of each region is $O(1)$. %We also show that the overall process takes $O(k^2)$ time.
\begin{lemma}\label{triang}
    Let $A_k$ be an arrangement of $k$ unit disks. In $O(k^2\log n)$ time, one can create a subdivision of $A^\prime_k$ by adding straight line segments such that each face is of size $O(1)$. 
\end{lemma}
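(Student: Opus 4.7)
The plan is to apply a standard vertical (trapezoidal) decomposition to $A_k$. Since the disks are unit disks, any two of their bounding circles meet in at most two points, so $A_k$ has $O(k^2)$ vertices, circular-arc edges, and faces; the arrangement itself can be constructed by a Bentley--Ottmann-style plane sweep in $O(k^2 \log k)$ time. Individual faces can have large complexity, though, so a further refinement is needed.

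To obtain $A'_k$, I would split each circle boundary at its leftmost and rightmost points into an upper and a lower semicircular arc, and then from every vertex $v$ of $A_k$ shoot a vertical ray upward and a vertical ray downward, each stopping at the first arc it meets. This adds at most two new segments per vertex, hence $O(k^2)$ new edges in total. Every resulting face is bounded above and below by at most one semicircular arc and laterally by at most two vertical segments, so every face has complexity $O(1)$, as required.

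The ray shooting can be carried out inside the same sweep: maintain a balanced binary search tree storing the arcs currently crossed by the sweep line in their vertical order, inserting arcs at their left endpoints and removing them at their right endpoints. When the sweep processes a vertex $v$, the arc immediately above $v$ and the arc immediately below $v$ in the BST are exactly the targets of the two vertical rays emitted at $v$, and are retrieved in $O(\log k)$ time. Summed over the $O(k^2)$ sweep events, the total work is $O(k^2 \log k) = O(k^2 \log n)$, since $k \le n$.

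The main obstacle is the care required to handle an arrangement of circles rather than lines: the primitives used to compare two arcs at a given $x$-coordinate, and to order events, must be the $O(1)$-time circle primitives instead of their linear analogues, and degenerate configurations (tangent disks, three or more arcs meeting at a point, and the locally vertical tangent points) must be handled explicitly. Pre-splitting each circle into two $y$-monotone pieces at its horizontal-tangent points removes the vertical-tangent issue, and the remaining degeneracies are handled by the usual symbolic perturbation or by grouping coincident events; once these are dispatched, the $O(1)$ face-complexity bound and the $O(k^2 \log n)$ running time follow from the classical analysis of trapezoidal decompositions applied to an arrangement of $y$-monotone curves.
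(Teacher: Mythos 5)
Your construction is essentially the paper's, rotated ninety degrees: the paper also refines the arrangement by a plane sweep that shoots, from each vertex, a visibility segment to the first arc hit on either side (horizontally, after first splitting arcs into monotone pieces), so that every cell is bounded by at most two arcs and two segments. The one substantive difference is that the paper first replaces each quarter-circle by a \emph{pseudoline}, extending its endpoints with tangent rays, before decomposing; this step is not needed for the $O(1)$ face-complexity claim of the lemma itself, but it lets the paper invoke its pseudoline sampling bound (Lemma~\ref{subroutinePseudo}) verbatim when bounding subproblem sizes later, whereas with your direct trapezoidation of the circle arrangement one would have to observe that the same "bad line" argument applies to closed curves that pairwise intersect $O(1)$ times. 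One small point to tighten in your write-up: the vertical rays must also be shot from the leftmost/rightmost split points of each circle (not only from intersection vertices of $A_k$), since otherwise a face touched by many pairwise-disjoint disks retains linear complexity; your pre-splitting makes these points vertices of the refined arrangement, so this is a matter of saying so explicitly (and note the last paragraph's "$y$-monotone pieces at horizontal-tangent points" should read "$x$-monotone pieces at the vertical-tangent, i.e.\ leftmost and rightmost, points" to match the vertical decomposition you actually perform).
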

\begin{proof} 
For each disk, we create four \emph{pseudolines} as follows. Consider partitioning the disk into four regions by drawing a vertical and a horizontal line through its center. For each circular arc, we create a pseudoline by extending its endpoints by drawing two rays following the tangent lines, as shown  
in Figure~\ref{triang}(a). However, the resulting subdivision may still contain faces with linear complexity (e.g., the face $F$ in Figure~\ref{triang}(b)). We subdivide each face further by %subdividing their arcs into y-monotone pieces and then
extending a horizontal line segment from each vertex. % to split the face.
The details are included in Appendix~\ref{app:tri}. At the end of the construction, each cell of the subdivision can be described using $O(1)$ arcs or segments. The construction inserts at most $O(k^2)$ straight lines and takes $O(\log n)$ time per addition to complete the process in $O(k^2\log n)$ time. 
\end{proof}

 \begin{figure}[pt]
     \centering
     \includegraphics[width=.5\linewidth]{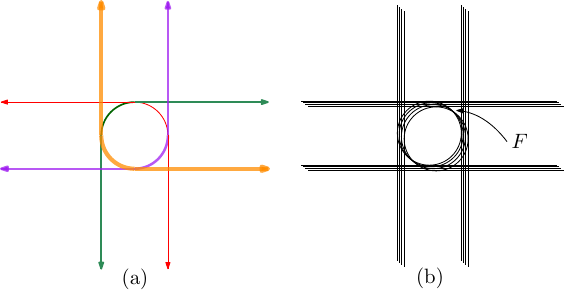}
     \caption{Illustration for the proof of Lemma~\ref{triang}.}
     \label{fig:boundary}
 \end{figure}
%We now show that the probability of obtaining a large subproblem is bounded by an allowable error probability. ...

We now show how to leverage Theorem~\ref{gen}. Let $R_1,\ldots,R_t$ be the faces of $A^\prime_k$. Let $s(R_i)$, where $1\le i\le t$, be the disks that intersect the closed region $R_i$ (except its vertices), but do not fully contain $R$. We subtract how many disks fully contain $R$ from $q$ and therefore, they should not be considered in the recursive subproblems. % We choose each  $s(R_i)$ as a subproblem, where update the parameter $q$ by subtracting the number of disks that entirely enclose $R_i$.
We show that the probability of a subproblem being large is bounded by $\epsilon$ (Appendix~\ref{app:subroutinePseudo}).  Lemma~\ref{lem:spanning} shows how to check whether there is a solution point $r$ (i.e., a point hitting at least $q$ disks) that coincides with a vertex of $A^\prime_k$ or spans at least two subproblems in $O(nk^2\log n)$ time.  Consequently, we obtain Theorem~\ref{thm:disk}.

\begin{lemma}\label{lem:spanning}
    Let $r$ be a point that hits at least $q$ disks. If $r$ coincides with a vertex of $A^\prime_k$ or spans at least two subproblems then it can be found in $O(nk^2\log n)$ time. 
\end{lemma}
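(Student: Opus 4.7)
The plan is to observe that a point $r$ either coinciding with a vertex of $A^\prime_k$ or spanning two or more subproblems must lie on the $1$-skeleton of the subdivision $A^\prime_k$ — that is, either on a vertex or in the relative interior of an edge — since any point lying in the interior of a face is captured by exactly one subproblem. The task therefore reduces to examining each of the $O(k^2)$ vertices and each of the $O(k^2)$ edges of $A^\prime_k$, and determining the maximum number of disks in $\mathcal{D}$ covering a point on that feature; if any such maximum reaches $q$, we return the witness.

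The vertex case is immediate: iterate over each of the $O(k^2)$ vertices $v$ and, for each of the $n$ disks in $\mathcal{D}$, test whether $v \in D$ in $O(1)$ time. This yields $O(n)$ per vertex and $O(nk^2)$ in total.

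For the edges I would perform a one-dimensional depth sweep along each edge. By Lemma~\ref{triang}, every edge of $A^\prime_k$ is either a sub-arc of a sampled unit disk or a straight line segment. Consequently, any disk $D \in \mathcal{D}$ intersects a fixed edge $e$ in $O(1)$ sub-intervals whose endpoints are computable in $O(1)$ time, giving a collection of at most $O(n)$ entry/exit events along $e$. Sorting these events by their parameter along $e$ in $O(n \log n)$ time and sweeping while maintaining a counter of disks currently containing the swept point identifies the maximum depth along $e$; at least one point achieving this maximum is a valid candidate $r$ that spans at least two subproblems. Summed over the $O(k^2)$ edges, the total cost is $O(nk^2 \log n)$.

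The main subtlety I anticipate is correctly initializing the sweep on each edge, because a disk that fully contains $e$ (or whose intersection with $e$ is non-trivial at the chosen starting endpoint) generates no local event yet must be counted in the depth throughout. I would handle this by performing one $O(n)$ containment scan at a fixed endpoint of $e$ to seed the initial count before the sweep; the sweep then simply increments at entry events and decrements at exit events. This initialization adds $O(n)$ per edge, i.e., $O(nk^2)$ overall, which is absorbed into the $O(nk^2 \log n)$ bound. Combining the vertex phase and the edge phase yields the claimed running time.
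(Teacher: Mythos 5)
Your proposal is correct and follows essentially the same route as the paper: reduce to the $1$-skeleton of $A^\prime_k$, then for each edge seed a counter with the number of disks containing one endpoint, sort the disk entry/exit events along the edge, and sweep while updating the count, for $O(n\log n)$ per edge and $O(nk^2\log n)$ overall. The initialization subtlety you flag is exactly how the paper handles it (counting the disks hit by the endpoint $v$ before the walk).
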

\begin{proof}
%Observe first that $r$ must lie on an edge or coincide with a vertex of the subdivision. 
For each edge $e=(v,w)$ of $A^\prime_k$, we first count the number of disks intersected by $v$ in $O(n)$ time. We then compute all the intersection points between $e$ and the input disks and sort them based on their distances from $v$ in $O(n\log n)$ time. Finally, we walk along $e$ from $v$ to $w$, and each time we hit an intersection point $o$, we update the current disk count (based on whether we are entering a new disk or exiting a current disk) to compute the number of disks intersected by $o$. %Therefore, a single edge can be processed in $O(n\log n)$ time, and all $k^2$ edges of the subdivision can be processed in $O(nk^2)$ time. 
\end{proof}
%Then the \emph{deepest region} in the arrangement of $\mathcal{D}$, i.e., a region $R$ that hits most disks in $\mathcal{D}$, corresponds to a solution to the \textsc{$q$-Points in a Disk}. In other words, a unit disk that contains the largest number of points of $S$ can be constructed by drawing a disk with a center in $R$. We thus focus our attention on finding a point that hits most disks in $\mathcal{D}$.

\begin{theorem}
    \label{thm:disk}
Given a set $S$ of $n$ points,  one can determine whether there is a unit disk with at least $q$ points in \(O(n^{1+o(1)})\) quantum time.    
\end{theorem}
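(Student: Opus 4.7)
The plan is to assemble the ingredients developed in this section and invoke Theorem~\ref{gen} as a black box. First, reduce the problem to finding a point $r$ in the plane contained in at least $q$ disks of $\mathcal{D}$, where $\mathcal{D}$ consists of $n$ unit disks centered at the points of $S$. Pick a random sample of $k$ disks from $\mathcal{D}$, form the arrangement $A_k$, and refine it via Lemma~\ref{triang} into a subdivision $A'_k$ of $O(k^2)$ constant-complexity faces $R_1,\ldots,R_t$ in $O(k^2\log n)$ time. For each face $R_i$, let $c_i$ be the number of disks of $\mathcal{D}$ that fully contain $R_i$ and let $s(R_i)$ be the set of disks that intersect the closed region $R_i$ (minus vertices) without fully containing it. The recursive subproblem attached to $R_i$ becomes: find a point in $R_i$ that hits at least $q - c_i$ disks of $s(R_i)$.

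Next, verify the three hypotheses of Theorem~\ref{gen}. The decomposition hypothesis is given by Lemma~\ref{triang}. For the ``spanning solution'' check, invoke Lemma~\ref{lem:spanning}, which handles (in $O(nk^2\log n)$ classical time) every witness $r$ that either coincides with a vertex of $A'_k$ or lies on a shared edge of two faces; if no such $r$ is found, every solution must lie strictly inside some face $R_i$, so it is captured by the corresponding subproblem together with the offset $c_i$. Note that the counts $c_i$ can be precomputed alongside Lemma~\ref{lem:spanning} by walking along the boundaries of each face once and using the intersection counts already computed there, still within the $O(nk^2\log n)$ budget. Finally, the probability bound on subproblem sizes is provided by the appendix analogue of Lemma~\ref{subroutine} for the pseudoline-plus-segment subdivision $A'_k$, giving the required constant $\delta$.

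With all three hypotheses of Theorem~\ref{gen} satisfied by decomposition cost $O(|M|^{1+o(1)}k^2)$ and a valid recursive reduction, RQS returns an answer in $O(n^{1+o(1)})$ quantum time, which yields Theorem~\ref{thm:disk}.

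The step I expect to be most delicate is the bookkeeping around $c_i$: one has to convince oneself that ``subtract off the disks fully containing $R_i$'' is harmless for the RQS recursion, i.e.\ that the per-level overhead remains $O(|M|^{1+o(1)}k^2)$ and that the reduced threshold $q-c_i$ can be carried down the recursion without inflating the analysis of Theorem~\ref{gen}. Beyond that, the argument is a direct application of the black box.
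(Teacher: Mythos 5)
Your proposal matches the paper's argument essentially step for step: the same reduction to a deepest-point search over the disk arrangement, the same refinement via Lemma~\ref{triang}, the same subproblems $s(R_i)$ with the offset $q-c_i$ for disks fully containing a face, the same use of Lemma~\ref{lem:spanning} for spanning solutions, and the same appendix pseudoline analogue of Lemma~\ref{subroutine} for the subproblem-size bound before invoking Theorem~\ref{gen}. The paper presents this assembly in the paragraph preceding the theorem rather than as a formal proof, but the content is identical, including your observation that the disks fully containing a face are subtracted from $q$ and excluded from the recursion.
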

\section{Determining Interval Containment}

Let $\mathcal{I}$ be an instace of \textsc{Interval Containment}, and let $P=(p_1,\ldots,p_n)$ and $Q=(q_1,\ldots,q_m),$ where $m=O(n)$, be the two sets of pairwise disjoint intervals of $\mathcal{I}$. We now give an  $O(n^{1+o(1)})$-time quantum algorithm to determine whether $P$ can be translated so that it becomes contained in $Q$. If there is an affirmative solution, then we can continuously move the intervals in $P$ until an endpoint of one of its intervals hits an endpoint of an interval of $Q$, as shown in  Figure~\ref{fig:intervals}(c)--(d). %We thus obtain the following lemma.

\begin{remark}\label{rem:point}
If  $\mathcal{I}$ admits an affirmative solution, then there is a solution where an end point of one interval of $P$ coincides with an end point of an interval in $Q$. 
\end{remark}

%The following theorem uses 
We now use Remark~\ref{rem:point} to find a solution for $\mathcal{I}$.
 
\begin{figure}[pt]
    \centering   \includegraphics[width=.65\linewidth]{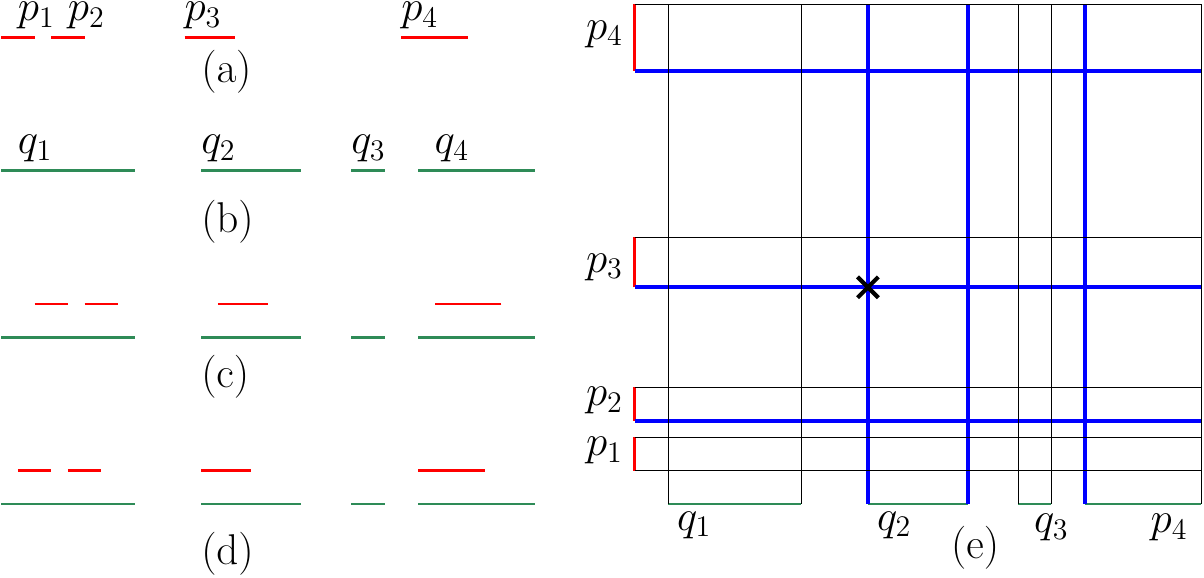}
    \caption{(a) $P$. (b) $Q$. (c) $P\subset Q$.  Illustration for (d) Remark~\ref{rem:point} and  (e) Theorem~\ref{thm:intervals}.}
    \label{fig:intervals}
\end{figure}

\begin{theorem}
    \label{thm:intervals}
Given two sets $P$ and $Q$ of $O(n)$ pairwise-disjoint intervals on a line, one can determine whether there is a translation of $P$ that makes it contained in $Q$ in \(O(n^{1+o(1)})\) quantum time.    
\end{theorem}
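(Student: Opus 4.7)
The plan is to reduce Interval Containment to a pair search problem with $O(n^2)$ candidate translations and then apply a recursive quantum search in the spirit of Theorem~\ref{gen}. By Remark~\ref{rem:point}, any affirmative solution can be chosen to be of the form $t = b - a$, where $a$ is an endpoint of an interval of $P$ and $b$ is an endpoint of an interval of $Q$; since $|P|=n$ and $|Q|=O(n)$, there are only $O(n^2)$ such candidate translations.

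As a preprocessing step I would sort the intervals of $P$ and the intervals of $Q$ by their left endpoints in $O(n \log n)$ classical time. With these sorted orders, a candidate translation $t$ can be verified in $O(n)$ classical time by a synchronized walk through the two sequences, assigning each $p_i + t$ to the unique $q_j$ that could contain it and testing containment. In other words, each pair is verifiable in $O(n^{1+o(1)})$ classical time, which is the input required by the RQS framework.

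For the quantum search, the plan is to sample $k$ random intervals from $P$ and $k$ random intervals from $Q$, and to form the $O(k^2)$ candidate translations obtained by pairing an endpoint of a sampled $P$-interval with an endpoint of a sampled $Q$-interval. Sorting these $O(k^2)$ translations partitions the translation line into $O(k^2)$ open subranges, which serve as the subproblems of Theorem~\ref{gen}. For each sampled candidate I would directly verify validity in $O(n)$ time, so that in $O(n k^2)$ total classical time one catches every valid translation that lies at a subrange boundary (covering the ``solution spans two subproblems'' case of Theorem~\ref{gen}). If no such solution is found, then within each open subrange $I$ only the intervals of $P$ and $Q$ whose containment status can change inside $I$---the \emph{active} intervals---need to be recursed upon, and a Clarkson--Shor-type bound adapting Lemma~\ref{subroutine} should show that with probability at least $1-\epsilon$ each subrange has only $O\bigl(\tfrac{n}{k}(\log n + \log \epsilon^{-1})\bigr)$ active intervals. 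Applying Theorem~\ref{gen} with $|M|=n$ then yields the claimed $O(n^{1+o(1)})$ quantum running time.

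The main obstacle is establishing the subproblem-size bound, since Lemma~\ref{subroutine} is formulated for random samples of straight lines and the partition of the translation line here arises from random pairings of $P$- and $Q$-endpoints. I expect the argument to go through by a two-step reduction: first observe that the translations induced by the sampled endpoints form a 1D random sample from the full set of $O(n^2)$ pair-induced translations, so a standard 1D random-sampling tail bound limits the number of unsampled pair translations in each subrange; then translate this bound on ``events'' into a bound on distinct active intervals by exploiting that each unsampled interval typically contributes only a few events per subrange. A secondary subtlety, parallel to the ``closed face without its vertices'' caveat in Lemma~\ref{subroutine}, will be handling coincident candidate translations arising from degenerate endpoint alignments; these can be disposed of during the boundary verification step.
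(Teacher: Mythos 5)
Your reduction to pair search via Remark~\ref{rem:point} and your $O(n)$ per-candidate verification match the paper, but your decomposition does not: you partition the one-dimensional \emph{translation line} by the $O(k^2)$ translations induced by a $k$-sample of $P$ and a $k$-sample of $Q$, whereas the paper never leaves the endpoint space. It places the $P$-endpoints on one axis and the $Q$-endpoints on the other, draws the $2n+2m$ axis-parallel lines through them, builds the arrangement of $k$ lines sampled from this union, and takes as subproblems the lines crossing each rectangular face; Lemma~\ref{subroutine} then applies verbatim to a sample of straight lines, and ``spans two subproblems'' reduces to verifying the $O(k^2)$ vertices.

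The obstacle you flag --- the subproblem-size bound --- is not a technicality you can defer; as stated your decomposition provably fails on it. The $k^2$ sampled translations form a \emph{product} sample $\{b-a : a\in A,\ b\in B\}$, not a uniform sample of the $O(n^2)$ candidate translations, so consecutive-gap tail bounds for uniform samples do not apply. Concretely, take $P$-endpoints near $1,2,\dots,n$ and $Q$-endpoints near $N,2N,\dots,nN$ for large $N$. The sorted candidate translations split into $n$ blocks indexed by $j$ (the values $jN-i$, $i=1,\dots,n$), and for any $j\notin B$ the \emph{entire} block of $n$ translations falls inside a single subrange between consecutive sampled translations. Every interval of $P$ has its containment status change in that subrange, so that subproblem has $\Theta(n)$ active $P$-intervals --- not $O\bigl(\tfrac{n}{k}(\log n+\log\epsilon^{-1})\bigr)$ --- and this happens with probability $1$ whenever $k<n$. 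The recursion therefore makes no progress (RQS would return Error). The underlying reason is that a subrange of translations corresponds to a diagonal strip $\{(a,b): t_1<b-a<t_2\}$ in the endpoint grid, which is not a product set and not a self-similar subinstance. The paper's two-dimensional formulation is precisely the repair: its faces are axis-aligned rectangles, i.e.\ products of an endpoint range of $P$ with an endpoint range of $Q$, each meeting only $O\bigl(\tfrac{n}{k}(\log n+\log\epsilon^{-1})\bigr)$ of the endpoint lines with probability $1-\epsilon$. If you want to keep your outline, you should replace the 1D partition by this 2D arrangement; the rest of your argument (candidate enumeration, $O(nk^2)$ boundary check, Grover over faces) then goes through as in the paper.
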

\begin{proof}
We place the intervals of $Q$ on the positive x-axis starting from $(1,0)$ and the intervals of $P$ on the positive y-axis starting from $(0,1)$, as shown in Figures~\ref{fig:intervals}(e). Consider a set $H$ of $2n$ horizontal lines and a set $V$ of $2m$ vertical lines through the endpoints of the intervals of $P$ and $Q$, respectively. Let $A_k$ be an arrangement determined by $k$ randomly chosen lines from  $(H\cup V)$, %$P$ and $k$ from $Q$ 
e.g., the thick blue lines of Figures~\ref{fig:intervals}(e). Add the smallest area rectangle containing $P$ and $Q$ to the arrangement so that we get a subdivision $T_k$, where its faces $R_1,\ldots, R_t$ are rectangles. By $s(R_i)$, where $1\le i\le t$, we denote the lines of $H$ and $V$ that intersect the closed region $R_i$. 

We now show how to leverage Theorem~\ref{gen}. By Remark~\ref{rem:point}, it suffices to examine pairs of endpoints from $P$ and $Q$.  Let $a$ be an endpoint from $P$ and $b$ be an endpoint from $Q$ that determine a solution. Let $o$ be the intersection point of the corresponding lines $\ell_a\in V$ and $\ell_b\in H$. We refer to $o$ as the solution point, which may lie at a vertex, or interior to an edge, or interior to a face of $T$. 

We choose $s(R_1),\ldots,s(R_t)$ as the subproblems. By Lemma~\ref{subroutine}, the probability of a subproblem being large is bounded by $\epsilon$.  In $O(nk^2 \log n)$ time, we can check whether $o$ coincides with a vertex of $T_k$, i.e., spans at least two subproblems (Figure~\ref{fig:intervals}(e)). However, we do not check whether the solution $o$ lies on an edge of $T$ because if $o$ lies interior to an edge or a face of $T_k$, then it is found by a Grover search over the subproblems. The running time follows directly from Theorem~\ref{gen}. %, we obtain the following result. 
\end{proof}

\section{Pair/Tuple Search and Generalizations}

%Consider now any problem $P$ of size $n$ where a solution for $P$ can be defined by a pair of elements in $P$. 
For a pair search problem $P$, if we can decide whether a given pair corresponds to a solution in $f(n)\in O(n^{o(1)})$ classical time, then a straightforward application to Grover search yields an $O(n^{1+o(1)})$-time quantum algorithm. However, we show how to solve $P$ in $O(n^{1+o(1)})$-time even when $f(n)\in O(n^{1+o(1)})$.

\begin{theorem}\label{thm:pairmagic}
    Let $P$ be a problem of size $n$ where a solution for  $P$ can be defined by a pair of elements in $P$. Assume that we can decide whether a given pair corresponds to a solution in $O(n^{1+o(1)})$ classical time. Then a solution pair can be computed in $O(n^{1+o(1)})$ time using a quantum algorithm.
\end{theorem}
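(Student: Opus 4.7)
The plan is to apply Theorem~\ref{gen} to the pair-search problem by decomposing the element set $P$ rather than the pair space directly. I would randomly partition $P$ into $k$ roughly equal parts $P_1,\ldots,P_k$, with $k=n^{1/\alpha}$ and $\alpha=\Theta(\sqrt{\log n/\log\log n})$ chosen as in Algorithm~1. This induces $\binom{k+1}{2}=O(k^2)$ subproblems $\mathrm{Sub}(i,j)$, each being a pair-search instance on the element set $P_i\cup P_j$ of size $O(n/k)$. Since every candidate pair $(a,b)$ belongs to a unique subproblem (the one indexed by the parts containing $a$ and $b$), the RQS spanning check is vacuous at $O(1)$ cost, and the decomposition itself runs in $O(n)$ time, well within the $O(|M|^{1+o(1)}k^2)$ budget required by Theorem~\ref{gen}.

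Next, I would apply Grover's search over the $O(k^2)$ subproblems and recurse on each selected subproblem. The recursion terminates at base cases of constant size, where each subproblem contains $O(1)$ candidate pairs, each verifiable via the assumed classical procedure $A$ in $O(n^{1+o(1)})$ time. Bounding the size of each random subproblem along the way uses a partition-sampling argument in the same spirit as Lemma~\ref{subroutine}: with probability at least $1-\epsilon$, no part $P_i$ contains more than $\frac{|M|}{k}\cdot\delta(\log|M|+\log\epsilon^{-1})$ elements, so the recursive calls stay within the RQS size envelope.

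The main obstacle will be controlling the leaf verification cost. A naive analysis suggests that Grover amplifies through $\alpha$ levels of $O(k^2)$ branching, visiting on the order of $k^\alpha=n^{1+o(1)}$ base-case subproblems; invoking $A$ at every visited leaf would cost $O(n^{2+o(1)})$, exceeding the target. To tighten the bound to $O(n^{1+o(1)})$, one must exploit that nested Grover combined with amplitude amplification (Theorem~2) returns a single marked subproblem at each level with constant success probability, so that the expensive procedure $A$ need only be run as a final classical verification on the one candidate pair produced by the bottommost Grover call, rather than inside the oracle of every enclosing Grover layer. With this placement of $A$, the work inside each oracle call is the $O(n)$ decomposition-plus-partition cost and the recursive Grover invocation, exactly matching the hypotheses of Theorem~\ref{gen}; adding a single $O(n^{1+o(1)})$ verification at the end gives the claimed total of $O(n^{1+o(1)})$ quantum time.
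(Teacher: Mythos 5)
Your decomposition---partitioning the elements into $k$ balanced groups and forming $O(k^2)$ bipartite subproblems $\mathrm{Sub}(i,j)$---is essentially interchangeable with what the paper does: the paper instead encodes each element $t_i$ as a horizontal line $y=i$ and a vertical line $x=i$, so that pairs become intersection points of a grid, samples $k$ of these lines, and recurses over the $O(k^2)$ rectangular faces exactly as in the proof of Theorem~\ref{thm:intervals}. Your version trades the random-sampling size bound of Lemma~\ref{subroutine} for a deterministic balanced partition (so no probabilistic argument is actually needed there) and makes the spanning check vacuous; up to this point the two arguments buy the same thing.

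The genuine gap is in your final paragraph. You correctly observe that the verification procedure $A$ costs $O(n^{1+o(1)})$ in the \emph{original} $n$ at every leaf, and that the nested Grover search evaluates on the order of $k^{\alpha}=n^{1+o(1)}$ leaves, so invoking $A$ inside each leaf evaluation yields $n^{2+o(1)}$. But your proposed repair---running $A$ only once, as a classical post-processing step on ``the one candidate pair produced by the bottommost Grover call''---cannot work. Grover search and amplitude amplification amplify the amplitude of states that the invoked procedure can itself certify as successful; the oracle at each level of RQS is precisely ``does this subproblem contain a solution?'', and at a leaf that question can only be answered by running $A$. If $A$ is pulled out of the oracle, every leaf is indistinguishable from every other, there is no marked state to amplify, and the bottommost call returns an arbitrary pair. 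The verification cost therefore necessarily multiplies the number of oracle invocations and cannot be deferred to a single final check. (I will add that the paper's own proof does not confront this tension: it reduces to the arrangement setting and cites Theorem~\ref{gen}, whose hypothesis asks for a per-node cost of $O(|M|^{1+o(1)}k^2)$ in the \emph{current} subproblem size $|M|$, whereas a black-box pair check costs $O(n^{1+o(1)})$ at every level of the recursion. You have made this difficulty explicit, which is valuable, but your resolution of it is not valid; completing the argument requires either a verification routine whose cost shrinks with the subproblem, or a different cost accounting.)
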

\begin{proof}
    We first label the elements of $P$ from $t_1$ to $t_n$. For each element $t_i$, we create a horizontal line $y=i$ and a vertical line $x=i$. Every pair of lines $(a,b)$, where one is horizontal and the other is vertical, %two points $(0,i)$ and $(i,0)$. We thus obtain $n$ points along x-axis and $n$ points along y-axis. A pair $(0,a)$ and $(b,0)$ 
    corresponds to a pair of elements $(t_a,t_b)$. 
    %We now create an arrangement of $2k$ lines by taking $k$ vertical lines that pass through $k$ randomly chosen points along x-axis and $k$ horizontal lines that pass through $k$ randomly chosen points along y-axis. 
    Now the search over the subdivision is similar to the proof of Theorem~\ref{thm:intervals}.
    %write the rest after writing the previous sections
\end{proof}

Theorem~\ref{thm:pairmagic} allows for an $O(n^{1+o(1)})$-time quantum algorithm for \textsc{Polygon Cutting} and \textsc{Disjoint projections} problems. The details are in Appendix~\ref{app:pc}. The pair search technique can be applied to obtain quantum speed up as long as the check for a pair takes sub-quadratic time. For example, if a pair can be checked in $O(n^{1+\beta})$ classical time, then the analysis of Theorem~\ref{gen} gives an algorithm with $O(n^{1+\beta})$ quantum time. Hence a maximum clique in a unit disk graph, where pairs of points are checked in $O(n^{1.5}\log n)$ classical time~\cite{DBLP:conf/wg/Eppstein09,DBLP:conf/compgeom/EspenantKM23},  can be found in $O(n^{1.5})$ quantum time. The pair search technique generalizes to $d$-tuple search, where one needs to search for a solution over an arrangement in $\mathbb{R}^d$. Appendix~\ref{app:ts} includes the details. %, where we observe that the $k$SUM problem\footnote{Here $k\ge 3$. Given a set $S$ of $n$ numbers, one needs to check whether $k$ of them sum to 0} can be solved in $O(n^{(k-1)/2+o(1)})$ quantum time, which is slightly faster than the classical running time of $O(n^{\lceil k/2\rceil })$~\cite{DBLP:conf/esa/CardinalIO16,dietzfelbinger1989lower}.

%We thus have the following theorem.

\begin{theorem}\label{thm:pairmagic3d}
    Let $P$ be a problem of size $n$ where a solution for  $P$ can be defined by a $d$-tuple of elements in $P$, where $d\in O(1)$. Assume that we can decide whether a given tuple corresponds to a solution in $O(n^\beta)$ classical time, where $\beta>0$. Then a solution for $P$ can be computed in $O(n^{1+\beta+o(1)})$ time using a quantum algorithm.
\end{theorem}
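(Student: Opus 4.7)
The plan is to extend the pair-search construction of Theorem~\ref{thm:pairmagic} from two to $d$ dimensions. First I would label the elements of $P$ as $t_1,\ldots,t_n$ and, for each element $t_i$ and each coordinate $j\in\{1,\ldots,d\}$, introduce the axis-aligned hyperplane $\{x\in\mathbb{R}^d:x_j=i\}$. Every $d$-tuple $(t_{a_1},\ldots,t_{a_d})$ then corresponds bijectively to the vertex $(a_1,\ldots,a_d)$ obtained by intersecting one hyperplane from each of the $d$ families, so the task of finding a solution tuple reduces to locating a ``solution vertex'' in an arrangement of $dn$ axis-aligned hyperplanes in $\mathbb{R}^d$.

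Next I would run a $d$-dimensional version of the Recursive-Quantum-Search of Theorem~\ref{gen}. At each recursive call on a subproblem $M$, sample $k$ hyperplanes uniformly at random, build their arrangement (a grid-like partition of $\mathbb{R}^d$ into $O(k^d)$ axis-aligned box cells with $O(k^d)$ vertices), and take as subproblems the sets of input hyperplanes meeting each cell. The concentration bound on subproblem sizes required by Theorem~\ref{gen} follows by the $d$-dimensional analog of Lemma~\ref{subroutine}: the $\epsilon$-net / random-sampling argument invoked there depends only on the VC-dimension of the underlying range space, and axis-aligned boxes in $\mathbb{R}^d$ have VC-dimension $2d$, so the same high-probability bound $\delta\,\frac{|M|}{k}(\log|M|+\log\epsilon^{-1})$ on per-cell subproblem sizes goes through for fixed $d$.

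The per-node work then splits in two. Spanning solutions, which coincide with one of the $O(k^d)$ vertices of the sampled arrangement, are enumerated and verified in $O(k^d\cdot n^\beta)$ classical time using the given $O(n^\beta)$ tuple checker, on top of $O(|M|^{1+o(1)}k^d)$ time for the decomposition itself. Non-spanning solutions, whose corresponding vertex lies interior to some cell, are found recursively via Grover search over the $O(k^d)$ subproblems, giving a $\sqrt{k^d}=k^{d/2}$ amplitude-amplification speedup at each level; at the base case (subproblem size at most $k$), a Grover sweep over the at most $O(k^d)$ candidate tuples in $O(k^{d/2}n^\beta)$ time completes the recursion.

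The main obstacle will be the cost accounting. With branching factor $k^d$ in place of $k^2$, the sum analogous to that in the proof of Theorem~\ref{gen} becomes $\sum_{j=0}^{\alpha} k^{dj/2}\bigl(|M_j|^{1+o(1)}k^d + k^d\, n^\beta\bigr)$ with $|M_j|\le n^{1-j/\alpha}$. Choosing $k=n^{1/\alpha}\delta(\log n+\log\epsilon^{-1})$ and $\alpha\in O(\sqrt{\log n/\log\log n})$ as before, I expect the geometric-series simplification to leave only polylogarithmic residuals per level once $k^\alpha\approx n$ is substituted, so the $d$-dependent $k^d$ factors collapse into $n^{o(1)}$ and the $n^\beta$ spanning-check term dominates the final bound. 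The delicate step is verifying that the $k^{d/2}$ Grover factor combines cleanly with the $k^d$ per-node factor across the telescoping — just as $k^2/n^{1/\alpha}$ collapses to polylog$(n)$ in Theorem~\ref{gen} — so that the total is the claimed $O(n^{1+\beta+o(1)})$ rather than growing with $d$.
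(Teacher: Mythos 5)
Your construction is the same as the paper's (Appendix~\ref{app:ts}): place the $n$ elements on $d$ orthogonal axis directions to get $dn$ axis-aligned hyperplanes, identify solution $d$-tuples with vertices of this arrangement, and run the recursive quantum search of Theorem~\ref{gen} with a random sample of $k$ hyperplanes, branching factor $O(k^d)$, classical checking of the $O(k^d)$ sample vertices at $O(n^\beta)$ each, and Grover search over the cells. The one place you genuinely diverge is the concentration bound on subproblem sizes: you invoke an $\epsilon$-net argument based on the bounded VC-dimension of axis-aligned boxes, whereas the paper gives a direct combinatorial argument exploiting the parallel structure --- it argues that a heavily intersected cell must have a facet crossed by $L/(2d^2)$ hyperplanes from a single parallel class $S'_i$, orders that class by distance from the origin, and reuses the one-dimensional ``$L'$ consecutive unsampled hyperplanes'' calculation from Lemma~\ref{subroutinePseudo}. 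Your route is legitimate for fixed $d$ and arguably more modular, but as stated it is slightly imprecise: the relevant range space is not points versus boxes but hyperplanes versus the crossing relation with cells of the \emph{sample's} arrangement (each such cell being defined by at most $2d$ sample hyperplanes), which is the Clarkson/Haussler--Welzl setting; that range space still has VC-dimension $O(d)$, so the bound goes through. Finally, you defer the ``delicate'' telescoping; the paper carries it out with $k=2d^2n^{1/\alpha}\delta(\log n+\log\epsilon^{-1})$ and obtains $T(n)=C_2\alpha\, n^{1+\beta+\frac{5d-2}{2\alpha}+o(1)}$, which is $O(n^{1+\beta+o(1)})$ for $d\in O(1)$, confirming your expectation that the $k^{d/2}$ Grover factor and the $k^d$ per-node factor collapse into $n^{o(1)}$.
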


\section{Conclusion}
In this paper we discuss quantum speed-up for some geometric 3SUM-Hard problems. We also show how our technique can be applied to a more general pair or tuple search setting. A natural avenue to explore would be to establish nontrivial lower bounds under quantum 3SUM conjecture. 

%---------------------------- Bibliography -------------------------------

% Please add the contents of the .bbl file that you generate,  or add bibitem entries manually if you like.
% The entries should be in alphabetical order
\small
\bibliographystyle{abbrv}
\bibliography{refs} % Entries are in the refs.bib file
\clearpage
\newpage
\normalsize
\appendix
\section{Subdividing an Arrangement of Straight Line Segments and Circular Acrs}
\label{app:tri}

In this section, we show how to subdivide an arrangement of $O(k^2)$ line segments and circular arcs in $O(k^2\log k)$ time such that the region corresponding to each cell can be described with $O(1)$ straight line segments and circular arcs. Note that it suffices to subdivide each face of the arrangement independently, because a subproblem is determined by the subset of disks that intersect a cell, which is independent of any information regarding other cells in the arrangement. Specifically, let $F_1$ and $F_2$ be two faces in the arrangement as shown in Figure~\ref{fig:sweep}(a), which are subdivided with straight line segments (shown in gray). Since they are subdivided independently, the cell $C$ of $F_2$ does not contain the information about the vertices that are added on its boundary when $F_1$ is being subdivided. In other words, the cell $C$ can be described only by the two line segments that appear inside $F_2$ and an arc that appears on $F_2$. 

In the following, we show how to subdivide an $m$-vertex face $P$ in $O(m\log m)$ time such that the region corresponding to each cell can be described with $O(1)$ straight line segments and circular arcs. 
%\subsection{Subdividing a Simple Polygon with Edges as Circular Arcs}

\begin{figure*}[pt]
    \centering
    \includegraphics[width=\linewidth]{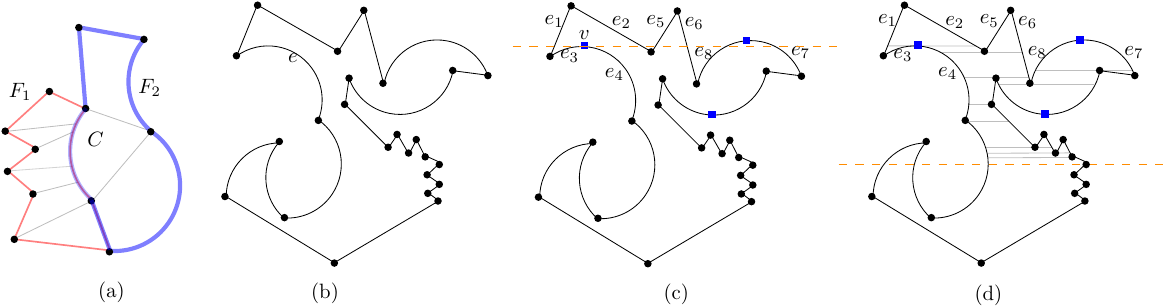}
    \caption{(a) Illustration for $F_1$ and $F_2$ in red and blue, respectively. (b) A face $F$. (c)--(d) Sweeping a line (shown in dashed orange) to subdivide $P$. }
    \label{fig:sweep}
\end{figure*}
%By an \emph{arc polygon} we denote a simple polygon where each edge is either an arc of a unit disk or a straight line segment (Figure~\ref{fig:sweep}(a)). In this section, we show how to subdivide an $n$-vertex arc polygon $P$ in $O(n\log n)$ time such that the region corresponding to each cell can be described with $O(1)$ straight line segments and circular arcs. 

Since a straight line can be considered as an arc of a circle with an infinite radius, in the following we do not distinguish between segments and arcs. We first subdivide each arc of $P$ by adding (at most two) dummy vertices on the arc such that in the resulting face $P^\prime$, all arcs become y-monotone. For example, the edge $e$ in Figure~\ref{fig:sweep}(a) is split into $e_3$ and $e_4$, as shown in Figure~\ref{fig:sweep}(b). Since we need to add at most $O(m)$ dummy vertices, this step takes $O(m)$ time. 

We sort the vertices of $P^\prime$ in decreasing order of their $y$-coordinates. We then sweep $P$ downward with a horizontal line $\ell$ from the topmost vertex of $P$. While sweeping the plane, we keep a dynamic binary search tree $\mathcal{T}$ over the arcs that are being intersected by $\ell$~\cite{mark2008computational}.   $\mathcal{T}$ is updated to process the events, i.e., when we reach the bottommost or topmost endpoint of an edge, where an update operation on $\mathcal{T}$ takes $O(\log m)$ time. 

Each time the sweep line hits a vertex $v$, we examine its incident edges. We use $\mathcal{T}$ to find the arcs $e_a$ and $e_b$ to the left and right of $v$, respectively, which are horizontally visible to $v$ inside $P^\prime$. For example, in Figure~\ref{fig:sweep}(b),  the left and right edges of $v$ are $e_1$ and $e_2$, respectively. Note that sometimes $e_a$ and $e_b$ may correspond to edges that are incident to $v$ (e.g., consider the topmost vertex of $P^\prime$).  We examine $e_a$ and $e_b$ to determine whether the horizontal visibility lines from $v$ to these edges lie inside the face, and if so, we draw a line segment through $v$ to hit these edges. 

Every time a horizontal visibility line is drawn, we obtain a new cell above this line in the subdivision. A cell cannot have any vertex on its left and right sides; otherwise, that vertex would split the cell further. Therefore, a cell can be described by two arcs bounding the left and right sides and at most two line segments bounding the top and bottom sides. Since we have $O(m)$ update operations,  the overall time complexity is $O(m\log m)$.

\section{Computing Subproblem Size with Pseudolines}
\label{app:subroutinePseudo}

The following lemma is a generalization of Lemma~\ref{subroutine}.

\begin{lemma}\label{subroutinePseudo}
Let $S$ be a set of simple pseudolines in $\mathbb{R}^2$, where every pair of lines intersect at most $O(1)$ times, and let $A$ be an arrangement of $k$ pseudolines that are randomly chosen from $S$. Let  $\mathcal{T}$ be a planar subdivision of size $O(k^2)$ obtained by adding some straight line segments to $A$ such that each face of $\mathcal{T}$ is of size $O(1)$. Then the probability of a closed face of $\mathcal{T}$, without its vertices,  intersecting more than $\delta \frac{|S|}{k}(\log (|S|) + \log (\epsilon^{-1}))$ pseudolines of $S$ is bounded by $\epsilon$, where    $\delta$ is a positive constant and  $\epsilon$ is an allowable error probability. 
\end{lemma}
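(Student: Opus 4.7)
The plan is to reduce to Lemma~\ref{subroutine} by observing that its proof for lines never really used the straightness of the objects: it only used (i) that an arrangement of any $k$ of them has complexity $O(k^2)$, (ii) that each face of the refinement $\mathcal{T}$ has $O(1)$ descriptive complexity and is therefore determined by an $O(1)$-size subset of the objects, and (iii) that the resulting range space of ``pseudolines conflicting with a face'' has bounded VC dimension. All three properties hold for simple pseudolines in which every pair crosses $O(1)$ times.

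First, I would verify (i): if every two pseudolines in $S$ meet in a bounded number of points, the arrangement of any $k$ of them has $O(k^2)$ vertices, edges, and faces by the standard Euler-formula counting argument, matching the line case. The straight line segments added to obtain $\mathcal{T}$ only increase this by a constant factor, since by hypothesis each resulting face has $O(1)$ complexity. Next, I would invoke the Clarkson--Shor random sampling framework (equivalently, the $\epsilon$-net theorem of Haussler--Welzl), setting up the range space whose universe is $S$ and whose ranges have the form $\{\ell\in S : \ell\text{ meets the closed face }F\text{ except its vertices}\}$, where $F$ ranges over all faces producible from an $O(1)$-size subset of $S$ together with the auxiliary straight segments used to build $\mathcal{T}$. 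Because each such $F$ is defined by an $O(1)$-size subset of $S$, the VC dimension of the range space is a constant.

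The standard exponential tail inequality from random sampling then gives, for a random sample $R\subseteq S$ of size $k$, that the probability of any such face having a conflict set larger than $c_1\cdot(|S|/k)(\log|S|+\log\epsilon^{-1})$ is at most $\epsilon$, for an absolute constant $c_1$ that depends only on the VC dimension and the constant face complexity; setting $\delta=c_1$ yields the claim. The number of candidate faces across all samples of size $k$ is $|S|^{O(1)}$, so the union bound used inside Clarkson--Shor goes through verbatim in the pseudoline setting.

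The main obstacle will be justifying that the combinatorial ingredients behind Ambainis and Larka's proof carry over from straight lines to pseudolines with $O(1)$ pairwise crossings. This reduces to checking (a) that each face of the pseudoline arrangement is still ``defined'' by an $O(1)$-size subset of $S$ in the precise sense required by Clarkson--Shor, and (b) that the probability computations depend only on the sizes of the defining and conflict sets, not on the curvature of the boundary. Both points follow from the $O(1)$-intersection assumption together with the assumption that each face of $\mathcal{T}$ has constant descriptive complexity, after which the argument becomes a direct transcription of the proof of Lemma~\ref{subroutine}.
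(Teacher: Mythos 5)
Your route is genuinely different from the paper's. You reduce the statement to the Clarkson--Shor/Haussler--Welzl random-sampling framework: a range space over $S$ whose ranges are conflict lists of faces definable from $O(1)$-size subsets, a constant VC-dimension claim, the exponential tail bound, and a union bound over the $|S|^{O(1)}$ candidate faces. The paper instead gives a direct, elementary argument in the style of Ambainis and Larka: for each pseudoline or added straight segment $l$ of $\mathcal{T}$, it orders the at most $C|S|$ intersection points of $l$ with $S$, colours a point white when it comes from a sampled pseudoline, declares $l$ \emph{bad} if it carries $L$ consecutive black points with $L=\frac{C|S|}{k}(5\log(C|S|)+\log\epsilon^{-1})$, bounds $\Pr[l\text{ is bad}]$ by $C|S|\cdot\binom{C|S|-L}{|W|}/\binom{C|S|}{|W|}\le \epsilon/(C^4|S|^4)$, and union-bounds over the $O(k^2)$ such curves; a face with $\beta=O(1)$ edges none of which lies on a bad curve then meets at most $\beta(L-1)$ pseudolines. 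The paper's argument needs only that each face is bounded by $O(1)$ edges, each lying on one of $O(k^2)$ curves; yours, if it goes through, is more modular and would transfer to any decomposition scheme with bounded VC dimension.

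There is, however, a concrete gap in your key step (a). The Clarkson--Shor tail bound requires the decomposition to be \emph{locally defined}: each cell must be determined by an $O(1)$-size defining subset $D\subseteq S$ in such a way that the cell appears in the decomposition of any sample containing $D$ and avoiding the cell's conflict list. The lemma's hypothesis gives you only that $\mathcal{T}$ is obtained by adding \emph{some} straight segments so that every face has $O(1)$ descriptive complexity; those segments may depend globally on the entire sample (in the paper's Appendix~\ref{app:tri} they are produced by a sweep over the whole arrangement), so a face of $\mathcal{T}$ need not be canonically determined by a constant number of sampled pseudolines, and the union bound over ``all faces producible from an $O(1)$-size subset'' is not licensed by the stated assumptions. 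To close this you would either have to fix a canonical, locally defined refinement (e.g., the vertical/trapezoidal decomposition, where each cell is defined by at most four curves) and prove the lemma for that specific $\mathcal{T}$, or switch to the paper's per-curve counting argument, which is insensitive to how the auxiliary segments are chosen.
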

\begin{proof}
We adapt the proof of Ambainis and Larka~\cite{ambainis2020quantum}. Let $P$ be the chosen pseudolines. Assume that no pseudolines intersect more than $C$ times, where $C$ is a constant, and let $l$ be any straight line segment or pseudoline of $A$. Let $X_{1}, X_{2},...,X_{m}$ be the ordered intersections with pseudolines from $S$, where $m \leq C|S|$. We color the intersection points corresponding to the pseudolines of $P$ white, otherwise, color them black. Define $L = \frac{C|S|}{k}(5\log(C|S|) + \log(\epsilon^{-1}))$. We say a pseudo-line $l$ is bad if it has $L$ consecutive black intersections. 
\begin{align*}
     Pr[l \text{ is bad} ] &= Pr[\bigvee_{i=1}^{m-L+1} (X_{i}, X_{i+1},..., X_{i+L-1} \text{ are black})]\\
    &\leq Pr[\sum_{i=1}^{m-L+1} (X_{i}, X_{i+1},..., X_{i+L-1} \text{ are  black})]\\
    %we want to make sure L intersection points are there as black - so how many ways you can choose white?
    &\le  (m-L+1)\frac{{C|S|-L\choose |W|}}{{C|S|\choose |W|}}, \text{where $|W|$ is the} \\ &\text{number of white intersection points}\\
    &\leq C|S|\frac{ (C|S|-L)^{|W|}}{(C|S|)^{|W|}}
\end{align*}
From Ambainis and Larka~\cite{ambainis2020quantum}, this is bounded by $\frac{\epsilon}{C^{4}|S|^{4}}$. Since the size of $\mathcal{T}$ is $O(k^2)\in O(|S|^4)$, the probability of $\mathcal{T}$ to contain a bad pseudoline is bounded by $\epsilon$. Let $F$ be a face of $\mathcal{T}$ with $\beta$ edges where $\beta$ is a constant. Since $F$ does not contain any bad line, the number of pseudolines that can intersect $F$ is at most $\beta(L-1)$, as required. 
\end{proof}

\section{Polygon Cutting and Disjoint Projections}\label{app:pc}
Let $\mathcal{I}$ be an instance of \textsc{Polygon Cutting}.  If $\mathcal{I}$ has an affirmative solution $\ell$ that intersects the given edge $e$ and cuts the input polygon $P$  into $K$ pieces, then a solution can be described by a pair of vertices of $P$ as follows: First move $\ell$ continuously until it hits a vertex of $P$ and then continuously rotate $\ell$ clockwise anchored at $v$ until it hits another vertex $v$.

If the pair $(u,v)$ is specified, then a corresponding solution (if exists) can be computed as follows. Let $\ell_1$ be the line through $u,v$. Sweep $\ell_1$ to find a line $\ell_2$ parallel to $\ell_1$ that lies on one half-plane of $\ell_1$ such that no vertex of $P$ appears between $\ell_1$ and $\ell_2$. Similarly, find another line $\ell_3$ on the other half-plane of $\ell_1$. Considering the construction of $\ell$, one of the lines among  $\ell_2$ and $\ell_3$ corresponds to a solution. Given a pair $(u,v)$, such a check can be done in $O(n\log n)$ time. By Theorem~\ref{thm:pairmagic}, we obtain a solution for \textsc{Polygon Cutting} in $O(n^{1+o(1)})$ quantum time.

We can apply a similar technique for the \textsc{Disjoint Projections} problem, and thus obtain the following theorem.

\begin{theorem}
  An instance $\mathcal{I}$ of \textsc{Polygon Cutting} (or, \textsc{Disjoint Projections}) of size $n$ can be solved in $O(n^{1+o(1)})$ quantum time, where $n$ denotes the size of $\mathcal{I}$.
\end{theorem}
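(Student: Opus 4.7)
The plan is to reduce each problem to the pair-search framework of Theorem~\ref{thm:pairmagic}, so the bulk of the work is to (i) show that an affirmative instance admits a canonical solution certified by a pair of elements, and (ii) exhibit a classical $O(n^{1+o(1)})$-time verifier that, given such a pair, either outputs a valid solution or rejects. Once both pieces are in place, Theorem~\ref{thm:pairmagic} immediately yields an $O(n^{1+o(1)})$-time quantum algorithm.

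For \textsc{Polygon Cutting}, I would follow the canonicalization argument already hinted at in the text preceding the theorem. Let $\ell$ be a hypothetical line that crosses the distinguished edge $e$ and cuts $P$ into exactly $K$ pieces. I would first translate $\ell$ parallel to itself within the half-plane where the piece count is unchanged, until it passes through some vertex $u$ of $P$; then I would rotate $\ell$ about $u$, in whichever direction preserves the intersection with $e$ and the piece count is locally constant, until it hits a second vertex $v$. The intended piece count is achieved in an open neighbourhood of $\ell$ on one side of the critical configuration, so the pair $(u,v)$ together with two infinitesimal perturbations (for the two parallel lines $\ell_2,\ell_3$ from the text) is a valid certificate. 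The verifier, on input $(u,v)$, builds the line through $u,v$, forms the two perturbed lines by sweeping to the next vertex on each side, and for each computes the number of pieces by intersecting the line with the polygon boundary and running a standard planar sweep in $O(n \log n)$ time; it accepts iff some perturbation yields exactly $K$ pieces and crosses $e$. Plugging this into Theorem~\ref{thm:pairmagic} gives the bound.

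For \textsc{Disjoint Projections}, let $S=\{C_1,\ldots,C_n\}$ be the convex objects and parametrize candidate directions by the unit vector $\theta\in S^1$ onto which we project. For each ordered pair $(C_i,C_j)$, the set of directions $\theta$ for which the projections of $C_i$ and $C_j$ overlap is a closed arc of $S^1$, whose boundary consists of the directions perpendicular to the common outer tangents of $C_i$ and $C_j$. Hence the set of ``good'' directions (for which all projections are pairwise disjoint) is a union of open arcs whose endpoints are determined by pairs of objects. I would therefore certify a solution by such a critical pair $(C_i,C_j)$: from the pair compute the $O(1)$ common-tangent directions, and from each, obtain a slight perturbation lying inside the adjacent arc. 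The verifier tests a candidate direction $\theta$ in $O(n \log n)$ classical time by projecting all $n$ objects onto the line perpendicular to $\theta$ (each projection is an interval of a convex set, computable in $O(\log n)$ time after preprocessing each $C_i$, or in $O(n)$ time overall for polygons) and sorting the resulting intervals to detect any overlap. Again, Theorem~\ref{thm:pairmagic} closes the argument.

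The main obstacle I anticipate is ensuring the canonicalization is clean and that the verifier genuinely runs in subquadratic classical time per pair in the intended input model. For \textsc{Polygon Cutting} one must carefully treat degeneracies such as multiple collinear vertices and the requirement that $\ell$ cross $e$ (which constrains which rotation direction is legal), and for \textsc{Disjoint Projections} one must choose a representation of convex objects in which both the $O(1)$ common tangents of a pair and the projection of a single object onto a line can be computed in polylogarithmic time after linear preprocessing; otherwise the verifier cost may exceed the $O(n^{1+o(1)})$ budget required by Theorem~\ref{thm:pairmagic}.
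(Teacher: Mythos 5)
Your proposal matches the paper's argument: for \textsc{Polygon Cutting} you use the same canonicalization (translate the cutting line to a vertex $u$, rotate about $u$ to a second vertex $v$, certify by the pair $(u,v)$ plus the two parallel perturbations $\ell_2,\ell_3$) and the same $O(n\log n)$ sweep-based verifier, then invoke Theorem~\ref{thm:pairmagic}. For \textsc{Disjoint Projections} the paper only says ``a similar technique applies,'' and your common-tangent/critical-direction certificate with an $O(n\log n)$ projection-and-sort verifier is a correct and more explicit instantiation of that same reduction.
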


% \section{Computing Subproblem Size with Hyperplanes}
% \label{app:subroutine3D}
% \begin{lemma}\label{subroutine3D}
% Let $S$ be a set of hyperplanes in $\mathbb{R}^d$, where every pair of hyperplanes are either parallel or perpendicular to each other, and let $A$ be an arrangement of $k$ hyperplanes that are randomly chosen from $S$. Let  $\mathcal{T}$ be a subdivision in $\mathbb{R}^d$  such that each cell of $\mathcal{T}$ is of size $O(1)$. 
 
% The probability of a closed cell of $\mathcal{T}$, without its vertices,  intersecting more than $\delta \frac{|S|}{k}(\log (|S|) + \log (\epsilon^{-1}))$ hyperplanes of $S$ is bounded by $\epsilon$, where    $\delta$ is a positive constant and  $\epsilon$ is an allowable error probability. 
% \end{lemma}

\section{Tuple Search}\label{app:ts}
One can also generalize the pair search to $k$-tuple search. In this case, we can arrange the $n$ elements of $S$ on $d$ basis vectors that are pairwise orthogonal, and search over a $d$-dimensional arrangement. %, where each cell is of size $O(d)$. 
%Let $H$ be a set of $O(dk)$ hyperplanes, where for each basis vector $\vec{v}$, a set of $k$  hyperplanes are randomly chosen from the $n$ hyperplanes that are orthogonal to $\vec{v}$. An arrangement of these $O(kd)$ hyperplanes takes $O(|H|^d)$ time. 

Note that we have a set of $S^\prime$ hyperplanes, where $|S^\prime|=d|S|$, and it can be partitioned into $d$ subsets $S^\prime_1,\ldots,S^\prime_d$ where the hyperplanes in each subset are parallel to each other. Let $H$ be a set of $k$ randomly chosen hyperplanes from $S^\prime$  and let $\mathcal{A}$ be the arrangement of $H$. Every cell $C$ of $\mathcal{A}$ is a hypercube and thus contains $2d$ faces of dimension $(d-1)$. 

We now show that the subproblem sizes would be small, i.e., no cell is intersected by $L$ hyperplanes from $(S^\prime\setminus H)$, where $L = \frac{2d^2|S^\prime|}{k}(5\log(|S^\prime|) + \log(\epsilon^{-1}))$.  Let $f_j$, where $1\le j\le 2d$, be a face of cell $C$ with its normal parallel to the $j$th basis vector. If a cell $C$ is intersected by at least $L$ hyperplanes from $(S^\prime\setminus H)$, then a face $f_j$ is intersected by at least $\frac{L}{2d}$ of these hyperplanes, and $\frac{L}{2d^2}$ of these belong to some $S^\prime_i$.  We now show that the probability of a face $f_j$ being intersected by at least $\frac{L}{2d^2}$  hyperplanes from $(S^\prime_i\setminus H)$ is bounded by $\epsilon$.

Let $h$ be such a hyperplane determined by $f_j$.  Since the hyperplanes of  $S^\prime_i$ are parallel to each other and since each hyperplane of $S^\prime_i$ is perpendicular to $h$, we can order these hyperplanes based on their distances from the origin. We now examine the probability of $h$ being bad, i.e., having a face intersected by at least $L^\prime = \frac{L}{2d^2}$ hyperplanes of $S^\prime_i$. We now apply the same analysis as we did in the proof of Lemma~\ref{subroutinePseudo}, as follows. We can set aside $L^\prime$ consecutive hyperplanes that intersect a cell in $(|S^\prime_i|-L^\prime-1)$ ways, and for each option, $H$ can be chosen from the remaining hyperplanes in $\frac{{(|S^\prime_i|-L^\prime)\choose |H|}}{ {(|S^\prime_i|)\choose |H|}}$ ways. Here the term $(|S^\prime_i|-L^\prime-1) \frac{{(|S^\prime_i|-L^\prime)\choose |H|}}{ {(|S^\prime_i|)\choose |H|}} \le |S^\prime_i| (\frac{(|S^\prime_i|-L^\prime)^ |H|} {(|S^\prime_i|)}) ^{|H|}$ is bounded by  $\frac{\epsilon}{|S^\prime_i|^4}$~\cite{ambainis2020quantum}. Note that for every $S^\prime_i$, $|S^\prime_i| = |S|$. Therefore, the probability of $h$ to be bad over all $S^\prime_i$, where $1\le i\le d$, is $\frac{d\epsilon}{|S|^4}$, and the probability of any of the $k$ hyperplanes of $\mathcal{A}$ to be bad is $\frac{|H|d\epsilon}{|S|^4}\le \epsilon$.

%A cell $C$ of $\mathcal{A}$ is convex and determined by $d$ hyperplanes. Let $h$ be such a hyperplane. The probability that $h$ is bad, i.e., it is intersected by $L$ hyperplanes of $(S\setminus H)$, where all belonging to a single cell is at most $(m-L-1) {(|S|-L)\choose |W|}/ {(|S|)\choose |W|}$, where $|W|$ is the number of planes of $H$ that intersects $h$. By an analysis similar to the proof of Lemma~\ref{subroutinePseudo}, this is bounded by $\epsilon/|S|^4$, and thus the probability of $H$ to have a bad hyperplane is bounded by $\epsilon$. Consequently, a cell $C$ can be intersected by at most $d(L-1)$ hyperplanes, implying that the subproblem sizes are small when $d$ is fixed.

We now extend the proof of Theorem~\ref{gen} to $d$-dimensions by setting $k = 2d^2n^{1/\alpha} \cdot\delta (\log  n  + \log \epsilon^{-1})$, where the algorithm will return an error if any of the subproblems is larger than $L$. Given a $d$-tuple, if one can check whether it corresponds to a solution in $O(n^{\beta})$ time, where $\beta\ge 0$, then we %can set $k$ to be $n^{d/\alpha} \cdot\delta (\log  n  + \log \epsilon^{-1})$,  where $\alpha \in O(\sqrt{\log n/\log \log n})$ to 
obtain the following time complexity with an analysis similar to Theorem~\ref{gen}. %\todo{need to argue about subproblem size with probability} 
The only exception is that the branching factor and construction time of the arrangement both increase to $O(k^d)$~\cite{halperin2017arrangements}, where $k=|H|$.

\begin{align*}
T(n) &\le C_2 \sum_{j=0}^{\alpha} \sqrt{(C_1k)^{dj}}\left(n^{(1-j/\alpha)}n^\beta n^{d/\alpha} n^{o(1)}\right)
\end{align*} 
\begin{align*} 
&\le  C_2 n^{1+\beta+(d/\alpha)+o(1)}\sum_{j=0}^{\alpha} \left(\frac{C_1k^{d/2}}{n^{1/\alpha}} \right)^{j}\\
&=   C_2 n^{1+\beta+(d/\alpha)+o(1)}\sum_{j=0}^{\alpha} \left(\frac{C_3n^{d/2\alpha} \delta (\log  n  + \log \epsilon^{-1})}{n^{1/\alpha}} \right)^{j}, \text{when $d\in O(1)$}\\
& \leq  C_2 n^{1+\beta+(d/\alpha)+o(1)} n^{(d-2)/2\alpha} \sum_{j=0}^{\alpha} (C_4  \log  n )^{j}\\
&\leq  C_2 \alpha(C_{4}   \log n)^{\alpha}n^{1+\beta+(d/\alpha)+({(d-2)/2\alpha})+o(1)}\\
&=  C_2 \alpha \left(\frac{C_{4}\log  n }{n^{ d/\alpha^2}} \right)^{\alpha} \left(n^{1+\beta+(2d/\alpha)+({(d-2)/2\alpha})+o(1)} \right)
\end{align*} 

% \begin{align*}
% T(n) &\le C_2 \sum_{j=0}^{\alpha} \sqrt{(C_1k)^{dj}}\left(n^{(1-j/\alpha)}n^\beta n^{d/\alpha} n^{o(1)}\right)
% \end{align*} 
% \begin{align*}
% &\le C_2 \sum_{j=0}^{\alpha} \sqrt{(C_1k)^{j}}  n^{d/2}\left(n^{(1-j/\alpha)}n^\beta n^{d/\alpha} n^{o(1)}\right)\\
% &\le  C_2 n^{1+\beta+(d/\alpha)+o(1)}n^{d/2}\sum_{j=0}^{\alpha} \left(\frac{C_1k}{n^{1/\alpha}} \right)^{j}\\
% &=   C_2 n^{1+\beta+(d/\alpha)+(d/2)+o(1)}\sum_{j=0}^{\alpha} \left(\frac{C_1n^{1/\alpha} \delta (\log  n  + \log \epsilon^{-1})}{n^{1/\alpha}} \right)^{j}\\
% & \leq  C_2 n^{1+\beta+(d/\alpha)+(d/2)+o(1)}\sum_{j=0}^{\alpha} (C_3\log  n )^{j}\\
% &\leq  C_2 \alpha(C_{3}\log n)^{\alpha}n^{1+\beta+(d/\alpha)+(d/2)+o(1)}\\
% &=  C_2 \alpha \left(\frac{C_{3}\log  n }{n^{ d/\alpha^2}} \right)^{\alpha} \left(n^{1+\beta+d/\alpha}n^{(d/\alpha)+(d/2)+o(1)} \right)
% \end{align*} 
 
If \(\alpha{=}\sqrt{\frac{d\log (n)}{\log (C_{4}) + \log\log (n)}}\), then  \(n^{\frac{d}{\alpha^{2}}} { =} C_{4}\log(n)\). Hence 
    $T(n) =  C_2 \alpha n^{1+\beta+\frac{5d-2}{2\alpha} +o(1)}$. For $d\in O(1)$, this is bounded by $O(n^{1+\beta+o(1)})$.

%Consider now the $k$SUM problem where $k\ge 3$, i.e., given a set $S$ of $n$ numbers, one needs to check whether $k$ of them sum to 0. It is known that  $k$SUM can be solved in $O(n^{\lceil k/2\rceil })$ classical time (see e.g.,~\cite{DBLP:conf/esa/CardinalIO16} for more recent progress). For a given $(k-1)$ size subset of $S$, we can search for a corresponding solution in $O(\log n)$ time. Therefore, we can use a $d$-tuple search, where $d=k-1$, to solve $k$SUM in $O(n^{1+(k-1)/2+o(1)})$ quantum time. % for any $\beta>0$.
\end{document}